\DeclareMathAlphabet{\mathpzc}{OT1}{pzc}{m}{it}
\theoremstyle{plain}
\newtheorem{thm}{Theorem}[section] % reset theorem numbering for each chapter
\theoremstyle{definition}
\newtheorem{defn}[thm]{Definition} % definition numbers are dependent on theorem numbers
\newtheorem{lem}[thm]{Lemma}
\newtheorem{rem}[thm]{Remark}
\newtheorem{cor}[thm]{Corollary}
\def\XXint#1#2#3{{\setbox0=\hbox{$#1{#2#3}{\int}$ }
		\vcenter{\hbox{$#2#3$ }}\kern-.6\wd0}}
\newcounter{MPequ}
\newcounter{AppA}
\newenvironment{AppA}
{\stepcounter{AppA}%
	\addtocounter{equation}{0}%
	\equation}
{\endequation}
\begin{document}\selectlanguage{english}
\begin{center}
\normalsize \textbf{\textsf{Existence of optimal domains for the helicity maximisation problem among domains satisfying a uniform ball condition}}
\end{center}
\begin{center}
	Wadim Gerner\footnote{\textit{E-mail address:} \href{mailto:wadim.gerner@inria.fr}{wadim.gerner@inria.fr}}
\end{center}
\begin{center}
{\footnotesize	Sorbonne Universit\'e, Inria, CNRS, Laboratoire Jacques-Louis Lions (LJLL), Paris, France}
\end{center}
{\small \textbf{Abstract:} 
	In the present work we present a general framework which guarantees the existence of optimal domains for isoperimetric problems within the class of $C^{1,1}$-regular domains satisfying a uniform ball condition as long as the desired objective function satisfies certain properties. We then verify that the helicity isoperimetric problem studied in [J. Cantarella, D. DeTurck, H. Gluck and M. Teytel, J. Math. Phys. 41, 5615 (2000)] satisfies the conditions of our framework and hence establish the existence of optimal domains within the given class of domains. We additionally use the same framework to prove the existence of optimal domains among uniform $C^{1,1}$-domains for a first curl eigenvalue problem which has been studied recently for other classes of domains in [A. Enciso, W. Gerner and D. Peralta-Salas, Trans. Amer. Math. Soc. 377, 4519-4540 (2024)].
\newline
\newline
{\small \textit{Keywords}: Helicity, Isoperimetric problems, Magnetohydrodynamics, Plasma fusion}
\newline
{\small \textit{2020 MSC}: 35Q31, 35Q35, 35Q85, 49J35, 49Q10, 76W05}
\section{Introduction}
In the theory of plasma fusion one measure of stability of the plasma is the helicity of the underlying magnetic field\footnote[2]{See for example arXiv identifier: 2112.01193 which is to appear in Dec 2023 as a chapter in the following book: S. Candelaresi, F. Del Sordo. Stability of plasmas through magnetic helicity. In K. Kuzanyan, N. Yokoi, M. Georgoulis, R. Georgoulis, editors, Helicities in Geophysics, Astrophysics and Beyond. Wiley publication}. It was observed by Woltjer \cite{W58} that in the context of ideal magnetohydrodynamics this so called helicity is a conserved physical quantity. Given a bounded smooth domain $\Omega\subset \mathbb{R}^3$ and a divergence-free field $B$ on $\Omega$ which is tangent to $\partial\Omega$ the helicity of $B$ can be defined as
\begin{equation}
	\label{E1}
	\mathcal{H}(B)(x):=\int_{\Omega}B(x)\cdot \operatorname{BS}(B)(x)d^3x
\end{equation} 
where
\begin{equation}
	\label{E2}
	\operatorname{BS}(B)(x):=\frac{1}{4\pi}\int_{\Omega}B(y)\times \frac{x-y}{|x-y|^3}d^3y
\end{equation}
denotes the Biot-Savart operator. In the context of ideal magnetohydrodynamics $B$ plays the role of the magnetic field. A physical interpretation in terms of linkage of distinct magnetic field lines was obtained in \cite{M69}, \cite{Arnold2014}, \cite{V03}. Then, given an electrically conducting plasma there will be an interplay between the time evolution of the magnetic field $B$ and the flow of the plasma particles due to the Lorentz force. One can then argue in the same spirit as in \cite[Chapter III]{AK98}. Namely, that due to Alfv\'{e}ns theorem \cite{A42}, an underlying plasma fluid is frozen into the magnetic field, in the sense that plasma particles which lie on an initial magnetic field line continue to lie on the same magnetic field line as time passes. Therefore, if the magnetic field is tangent to a domain $\Omega$ and has non-trivially linked field lines, then, due to the fact that distinct field lines cannot cross, also the plasma particles, being frozen into the magnetic field, will form linked structures which cannot be separated. Hence, it follows, in the context of ideal magnetohydrodynamics, where we assume the plasma to be perfectly electrically conducting, that a non-trivial linkage of magnetic field lines leads to a higher (topological) plasma stability and consequently a high helicity, being a measure for the linkage of distinct magnetic field lines, is desirable. In reality, the non-ideal situation, magnetic field line reconnection may occur, see for instance \cite{LPG21}, \cite{B96}, \cite{ELPS17} and references therein. However, obtaining a good understanding of the idealised problem, is a good starting point to gain a better understanding of how one may confine plasmas better within appropriate domains in $3$-space.
Let us point out that helicity as defined in (\ref{E1}) is not scaling invariant, i.e. given a magnetic field
$B$ and a constant $\lambda>0$ we see that $\mathcal{H}(\lambda B)=\lambda^2\mathcal{H}(B)$. However, the field lines of $\lambda B$ coincide with those of $B$ (they are simply traced out with a different speed) so that in fact the linkage of distinct field lines does not change under such a scaling. Therefore, if we wish to regard helicity as a measure for linkage, we should normalise it in the sense that we either fix the magnetic energy
\[
\mathcal{M}(B):=\int_{\Omega}B^2d^3x
\]
or divide by it to achieve a scaling invariant quantity. Both approaches are equivalent and here we adapt the latter.
In view of the above discussion it is of interest to consider the following quantity, where $\Omega\subset\mathbb{R}^3$ is a bounded domain with smooth boundary, more details will be given in section 2,
\begin{equation}
	\label{E3}
	\lambda(\Omega):=\sup_{\substack{B\in L^2(\Omega) \\ \operatorname{div}(B)=0\text{, }B\parallel\partial\Omega \\ \mathcal{M}(B)=1 }}\mathcal{H}(B)=\sup_{\substack{B\in L^2(\Omega)\setminus\{0\} \\ \operatorname{div}(B)=0\text{, }B\parallel\partial\Omega }}\frac{\mathcal{H}(B)}{\mathcal{M}(B)}=\sup_{\substack{B\in L^2(\Omega) \\ \operatorname{div}(B)=0\text{, }B\parallel\partial\Omega\\ \mathcal{H}(B)>0 }}\frac{\mathcal{H}(B)}{\mathcal{M}(B)}
\end{equation}
where the first equality follows from the scaling behaviour and the second equality follows from the fact that helicity is never maximised by magnetic fields of non-positive helicity. The meaning of the tangent boundary condition for an $L^2$-vector field will be explained in section 2.

Our goal then is to try to find a domain $\Omega_*\subset \mathbb{R}^3$ which maximises the quantity $\lambda(\Omega)$ among all other domains $\Omega$. However, once again, by scaling a given domain $\Omega$ by some constant $\lambda>0$, one easily sees that the quantity $\lambda(\Omega)$ may be made arbitrarily large if we do not impose any additional constraint on the allowed domains $\Omega$. It is well-known, and in fact the reason for helicity preservation in ideal magnetohydrodynamics, that helicity is preserved under the action of volume-preserving diffeomorphisms \cite[Theorem A]{CDGT002}, see also \cite[Section 2.3 Corollary]{Arnold2014} and \cite[Lemma 4.5]{G20} for a more abstract manifold setting. In fact, helicity is essentially the only such invariant see \cite{EPT16}, \cite{K16} and also \cite{KPY20}. It is hence natural to restrict attention to domains $\Omega$ of prescribed volume. In fact, it follows from \cite[Theorem B]{CDG00} that
\[
\lambda(\Omega)\leq R(\Omega)
\]
where $R(\Omega)$ is the radius of a ball whose volume is $|\Omega|$, i.e. $R(\Omega)=\sqrt[3]{\frac{3}{4\pi}|\Omega|}$ so that $\lambda(\Omega)$ is uniformly upper bounded among all domains of the same volume.
We thus are interested in the following problem, where $V>0$ is any fixed constant. Find $\Omega_*\subset\mathbb{R}^3$ with $|\Omega_*|=V$ and
\begin{equation}
	\label{E4}
	\lambda(\Omega_*)=\sup_{\substack{\Omega\subset\mathbb{R}^3\\ |\Omega|=V}}\lambda(\Omega)=\sup_{\substack{\Omega\subset\mathbb{R}^3\\ |\Omega|=V}}\frac{1}{\inf_{\substack{B\in L^2(\Omega) \\ \operatorname{div}(B)=0\text{, }B\parallel\partial\Omega\\ \mathcal{H}(B)>0 }}\frac{\mathcal{M}(B)}{\mathcal{H}(B)}}=\frac{1}{\inf_{\substack{\Omega\subset\mathbb{R}^3\\ |\Omega|=V}}\inf_{\substack{B\in L^2(\Omega) \\ \operatorname{div}(B)=0\text{, }B\parallel\partial\Omega\\ \mathcal{H}(B)>0 }}\frac{\mathcal{M}(B)}{\mathcal{H}(B)}}
\end{equation}
where the second equality is a simple reformulation of the original problem for the purpose of transforming the maximisation problem into an appropriate minimisation problem.

The study of problem (\ref{E4}) was initiated in \cite{CDGT002} by Cantarella, DeTurck, Gluck and Teytel. In their work they derived topological constraints for potential optimal domains. Their main result regarding the topology of smooth optimal domains is that any smooth optimal domain, if it exists, must be bounded by tori \cite[Theorem D]{CDGT002}. The existence of optimal domains remained however open.

In the present work we consider an appropriate subclass of domains, namely $C^{1,1}$-regular domains $\Omega\subset\mathbb{R}^3$ satisfying a uniform ball condition, and show that within this class of domains there always exists a domain $\Omega_*$ of prescribed volume which solves (\ref{E4}). Note that, in principle, there might be other smooth domains violating the given uniform ball condition with a larger value for $\lambda(\Omega)$. Hence, the existence of optimal domains in the unrestricted smooth setting remains open.

Let us point out why the class of domains satisfying a uniform ball condition is particularly suited to tackle volume constraint minimisation problems. There are roughly speaking two things that may go wrong regarding compactness properties when dealing with such problems. When we consider a minimising sequence $(\Omega_n)_n$ the first thing that might go wrong is that while each of the domains $\Omega_n$ stays bounded individually, in the limit their diameter might tend to infinity. Hence, these domains might ``approach" an unbounded domain so that we leave the desired class of domains (which are all bounded). The second thing that might go wrong, even if the diameter of all the $\Omega_n$ is uniformly bounded, is the formation of singularities. More precisely, if we wish to optimise an objective function among domains of certain regularity (let's say among domains with a $C^k$-boundary, for some prescribed $k\in \mathbb{N}$) then it might happen that different parts of the boundaries of the $\Omega_n$ approach each other, either from the inside or outside, and ``touch" each other in the limit so that the limiting domain is no longer a manifold with boundary. Of course, even if some form of compactness can be guaranteed with respect to some appropriate topology, it remains to show that the corresponding objective function has nice enough continuity properties in order to exploit variational techniques.

Now, when we deal with volume constraint minimisation problems, any sequence of domains $(\Omega_n)_n$, each of the same volume, satisfying a uniform ball condition will have a uniform bounded diameter. Otherwise one can fit more and more balls inside the $\Omega_n$ each of the same minimal volume $V_{\operatorname{min}}>0$ so that if the diameter becomes too large the volume of $\Omega_n$ would exceed $V$ which is absurd. Hence the first situation as described above cannot occur. We will make this statement precise in section 3. Additionally, the uniform ball condition implies that distinct boundary parts cannot come too close to each other, neither from the inside nor the outside, since otherwise the ball condition would be violated, which rules out the second problem mentioned above. This last intuitive reasoning was made precise in \cite[Theorem 2.8]{GY12}.

The notion of uniform ball domains in the context of shape optimisation was exploited and its compactness properties studied in \cite{GY12}. It is worthwhile to point out that the shape optimisation problem studied in \cite{GY12} is not a volume constraint problem, which shows that the class of domains satisfying a uniform ball condition is rather versatile (for the sake of clarity we mention that in \cite{GY12} the authors considered a class of domains $\Omega$ which are contained in some large ball $B_R$ so that they did not have to deal with the first potential problem regarding a loss of compactness as described above).
\newline
\newline
A second optimisation problem which we would like to address here is a variation of the helicity maximisation problem described above. In this optimisation problem the objective function $\Lambda(\Omega)$ is a modification of $\lambda(\Omega)$ where the supremum in (\ref{E3}) is taken not among all divergence-free fields tangent to the boundary, but in the more restrictive class of divergence-free fields tangent to the boundary which satisfy an additional zero-flux condition. A smooth vector field $X$ on a domain $\Omega$ is said to satisfy the zero-flux condition if for any surface $S\subset \Omega$ with $\partial S\subset \partial\Omega$ the flux of $X$ through $S$ is necessarily zero. The space of zero flux fields may be equivalently expressed as the space of divergence-free fields tangent to the boundary which are $L^2$-orthogonal to all harmonic fields, i.e. fields of vanishing divergence and rotation, see \cite[Hodge decomposition theorem]{CDG02}. One can then similarly ask for the existence of a domain maximising the corresponding quantity $\Lambda(\Omega)$ among all domains $\Omega$ of prescribed volume. This new optimisation problem is in fact equivalent to a shape optimising curl eigenvalue problem, see \cite[Proposition 2.4.3]{G20Diss}, which can also be studied on abstract manifolds. This curl eigenvalue problem is natural from the spectral theoretical point of view because restricting the curl operator to a suitable subset, \cite{YG90}, turns the curl into a self-adjoint operator with compact inverse and hence gives rise to a well-behaved spectrum. While, for example, the shape optimisation problem regarding the first Dirichlet-eigenvalue of the scalar Laplacian has a rich history which can be tracked back to Lord Rayleigh \cite{LR94}\footnote[3]{See also the reprinted version \cite{LR45}.} and was resolved independently by Faber and Krahn in \cite{Fa23} and \cite{Kr25}, the corresponding optimal domain problem for the curl operator has only been investigated recently, initiated independently in \cite[Chapter 2]{G20Diss} and \cite{EPS23}. Letting $\mu_1(\Omega)>0$ denote the smallest positive curl eigenvalue in this context, it follows from \cite[Theorem 2.1 \& Lemma 4.3]{G20}, that we have
\[
\inf_{\substack{B\in L^2(\Omega) \\ \operatorname{div}(B)=0\text{, }B\parallel\partial\Omega\\ \mathcal{H}(B)>0 \\ B \text{ is of zero flux}}}\frac{\mathcal{M}(B)}{\mathcal{H}(B)}=\mu_1(\Omega)
\]
and so according to (\ref{E4}), as pointed out already, maximising $\Lambda(\Omega)$ among domains of fixed volume is the same as minimising the first curl eigenvalue. So from a spectral theoretical point of view restricting the supremum in (\ref{E3}) to zero flux fields is a natural idea.

The minimisation of the first curl eigenvalue among domains of fixed volume has further been investigated in \cite{G23} and \cite{EGPS23}. While the former work derived further geometrical necessary conditions which optimal domains, assuming their existence, must satisfy, the latter dealt with existence questions among two different kinds of classes of domains. First, the existence of optimal domains within the class of uniform $C^{k,\alpha}$ domains for  fixed $k\in \mathbb{N}_{\geq 2}$ and $0<\alpha\leq 1$ contained in a large bounded ball was established, see \cite[Definition 5.1]{EGPS23} for a precise meaning of ``uniformity" in this context. Second, the existence of optimal domains within the class of convex domains was shown \cite[Theorem 1.2]{EGPS23}. For the sake of completeness we point out that the class of divergence-free fields tangent to the boundary and of zero flux coincides, on convex domains, with the class of divergence-free fields tangent to the boundary, i.e. each divergence-free field tangent to the boundary is necessarily a zero flux field \cite[Proposition 2.1]{G20} so that we immediately obtain the following result from \cite[Theorem 1.2]{EGPS23}.
\begin{thm}[\cite{EGPS23}]
	\label{IT1}
	For any given $V>0$ there exists a bounded convex domain $\Omega_*\subset \mathbb{R}^3$ of volume $V$ such that
	\[
	\lambda(\Omega_*)=\sup_{\substack{\Omega\subset\mathbb{R}^3\\|\Omega|=V\\\Omega\text{ convex, bounded}}}\lambda(\Omega).
	\]
	Further, the boundary of any maximiser $\Omega_*$ cannot be analytic.
\end{thm}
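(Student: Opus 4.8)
The plan is to deduce the statement from the corresponding result for the first curl eigenvalue, exactly along the lines indicated just before the theorem. The only structural input needed is \cite[Proposition 2.1]{G20}: on a bounded convex domain every divergence-free field tangent to the boundary is automatically of zero flux. Hence, for convex $\Omega$, the admissible class in the definition of $\lambda(\Omega)$ coincides with the smaller class of zero-flux fields defining $\Lambda(\Omega)$, and therefore $\lambda(\Omega)=\Lambda(\Omega)$ for every bounded convex $\Omega$. First I would record this identity carefully, paying attention to the $L^2$-meaning of the tangency and zero-flux conditions so that the two suprema are genuinely taken over the same set.

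Combining this with the identity $\Lambda(\Omega)=1/\mu_1(\Omega)$ noted above (the infimum of $\mathcal{M}(B)/\mathcal{H}(B)$ over admissible zero-flux fields equals the smallest positive curl eigenvalue), the convex helicity problem turns into a convex first curl eigenvalue problem,
\[
\sup_{\substack{\Omega\subset\mathbb{R}^3\\ |\Omega|=V\\ \Omega\text{ convex, bounded}}}\lambda(\Omega)=\sup_{\substack{\Omega\subset\mathbb{R}^3\\ |\Omega|=V\\ \Omega\text{ convex, bounded}}}\Lambda(\Omega)=\frac{1}{\inf_{\substack{\Omega\subset\mathbb{R}^3\\ |\Omega|=V\\ \Omega\text{ convex, bounded}}}\mu_1(\Omega)}.
\]
By \cite[Theorem 1.2]{EGPS23} there exists a bounded convex domain $\Omega_*$ of volume $V$ attaining the infimum of $\mu_1$ on the right-hand side; by the two identities the same $\Omega_*$ attains the supremum of $\lambda$, which is the first assertion. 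I expect this part to be essentially immediate once the coincidence of the two admissible classes is in place.

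The substantive part is the regularity obstruction, and I expect it to be the main obstacle. The strategy I would follow is to exploit the first-order optimality condition: perturbing $\Omega_*$ by volume-preserving deformations and using that $\Omega_*$ minimises $\mu_1$, the shape derivative forces the ground-state Beltrami field $B$, solving $\operatorname{curl}B=\mu_1 B$ with $B\parallel\partial\Omega_*$, to have constant norm $|B|$ on the portion of $\partial\Omega_*$ where the convexity constraint is inactive. If $\partial\Omega_*$ were analytic, then $B$ would extend analytically up to the boundary, so $|B|^2$ would be a real-analytic function on the analytic surface $\partial\Omega_*$; being constant on a relatively open subset, it would then be constant on all of $\partial\Omega_*$. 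With $|B|$ a positive constant, $B$ restricts to a nowhere-vanishing tangent field on $\partial\Omega_*$, forcing $\chi(\partial\Omega_*)=0$ by Poincar\'e--Hopf; this is precisely the mechanism behind the torus characterisation \cite[Theorem D]{CDGT002}. But the boundary of a convex body is a topological sphere with $\chi=2$, a contradiction. The delicate points, where the real work lies, are to justify the analytic continuation of $B$ across an analytic boundary, to handle the one-sided (variational-inequality) nature of the convexity constraint so that a genuinely free, relatively open piece of boundary carrying the constant-norm condition is available, and to exclude the degenerate case $|B|\equiv 0$ on $\partial\Omega_*$ via unique continuation for the overdetermined Beltrami system.
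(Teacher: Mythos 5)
Your reduction is exactly the paper's argument for the existence part: by \cite[Proposition 2.1]{G20}, on a bounded convex domain every divergence-free field tangent to the boundary automatically has zero flux, so $\lambda(\Omega)=\Lambda(\Omega)=1/\mu_1(\Omega)$ for convex $\Omega$, and a convex maximiser of $\lambda$ of volume $V$ is then exactly a convex minimiser of $\mu_1$, which exists by \cite[Theorem 1.2]{EGPS23}. Where you diverge from the paper is in your treatment of the non-analyticity assertion. You take it to be the substantive open part; in the paper it is nothing of the sort, because the statement that no optimal convex domain can have analytic boundary is itself part of what is quoted from \cite[Theorem 1.2]{EGPS23}, and since $\lambda\equiv\Lambda$ on convex domains the two problems have identical maximisers, so the obstruction transfers verbatim. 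That is why the theorem carries the attribution to \cite{EGPS23}: the entire statement, both halves, is a corollary of that citation combined with \cite[Proposition 2.1]{G20}, and the paper gives (and needs) no further proof.

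Your proposed independent proof of the non-analyticity is the right mechanism in spirit --- it is the one underlying \cite[Theorem D]{CDGT002} and the cited work: a first-order optimality condition forcing $|B|$ to be constant on a free portion of $\partial\Omega_*$, analyticity propagating that constancy to all of $\partial\Omega_*$, and Poincar\'e--Hopf on the topological sphere $\partial\Omega_*$ yielding a contradiction, with unique continuation handling $|B|\equiv 0$. But as written it is a programme rather than a proof: you yourself list the hard steps (the one-sided, variational-inequality nature of the convexity constraint; analytic continuation of $B$ up to the boundary; the degenerate case) and leave every one of them open. There is moreover a gap you do not flag: the first curl eigenvalue need not be simple, so the shape derivative of $\mu_1$ may fail to exist, and extracting the constant-$|B|$ condition at a minimiser requires a one-sided/multiplicity argument, not a naive differentiation. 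One of your listed worries, on the other hand, largely dissolves under the contradiction hypothesis itself: if $\partial\Omega_*$ is analytic, Gauss--Bonnet forces the Gauss curvature to be positive on some relatively open set, and perturbations supported there preserve convexity for small times, which supplies the ``free'' boundary piece. In short: if your goal is to prove the theorem as the paper does, replace your second part by the citation; if your goal is an independent proof, the gaps above must be filled before the argument stands.
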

To the best of my knowledge no further advances regarding the existence of optimal domains in classes other than the convex domains and uniformly H\"{o}lder domains are known in the literature.

The goal of the present paper is to establish a new class of domains within which the existence of optimal domains can be guaranteed. We note that, as soon as we allow domains to have a non-trivial first de Rham cohomology group, the maximisation problems regarding $\lambda$ and $\Lambda$ are distinct problems, \cite[Proposition 2.1]{G20} and have to be treated separately. In fact, the differences between these problems allowed in \cite[Theorem 1.2]{EPS23} and \cite[Theorem 2.7]{G23} to rule out the possibility for a broad class of rotationally symmetric domains to be optimal for $\Lambda$ while identical arguments do not apply to $\lambda$. Despite some key differences these two optimisation problems also share some key features which we can exploit in order to establish the existence of optimal domains within appropriate classes of domains.
\newline
\newline
The structure of the remaining paper is as follows. In section 2.1 we introduce the necessary notions and notations which will be used throughout the paper. In section 2.2 we state our main results. We start by stating an abstract existence result, which proves the existence of optimal domains within the class of $C^{1,1}$-domains satisfying a uniform ball condition, provided the objective function which we wish to optimise has certain features. We then state the corresponding existence results for our optimisation problems regarding $\lambda(\Omega)$ and $\Lambda(\Omega)$. In section 3 we give the proof of the abstract main result while in section 4 we prove that the objective functions $\lambda$ and $\Lambda$ both satisfy the conditions of the abstract result which will establish the existence of optimal domains in the described class.
\section{Main results}
\subsection{Notation and preliminary notions}
We first introduce the following standard ball conditions and state then a corresponding compactness property.
\begin{defn}[Ball conditions]
	\label{2D1}
	Let $\Omega\subset \mathbb{R}^N$ be an open set.
	\begin{enumerate}
		\item We say that $\Omega$ satisfies the interior ball condition at a given $x_0\in \partial\Omega$ if there is some $y(x_0)\in \mathbb{R}^N$ and $r(x_0)>0$ such that the open ball $B_{r(x_0)}(y(x_0))$ is contained in $\Omega$ and $x_0\in \partial B_{r(x_0)}(y(x_0))$. We say that $\Omega$ satisfies the interior ball condition if it satisfies the interior ball condition at each $x_0\in \partial\Omega$. We say that $\Omega$ satisfies a uniform interior ball condition if there exists some $r_{\Omega}>0$ such that $\Omega$ satisfies the interior ball condition at each $x_0\in \partial\Omega$ and we can choose $r(x_0)\geq r_{\Omega}$.
		\item We say that $\Omega$ satisfies the exterior ball condition at a given $x_0\in \partial\Omega$ if there is some $y(x_0)\in \mathbb{R}^N$ and $r(x_0)>0$ such that the closed ball $\overline{B_{r(x_0)}(y(x_0))}$ is contained in $\mathbb{R}^N\setminus \Omega$ and $x_0\in \partial B_{r(x_0)}(y(x_0))$. We say that $\Omega$ satisfies the exterior ball condition if it satisfies the exterior ball condition at each $x_0\in \partial\Omega$. We say that $\Omega$ satisfies a uniform exterior ball condition if there exists some $r_{\Omega}>0$ such that $\Omega$ satisfies the exterior ball condition at each $x_0\in \partial\Omega$ and we can choose $r(x_0)\geq r_{\Omega}$.
		\item We say that $\Omega$ satisfies the uniform ball condition if $\Omega$ satisfies the uniform interior and uniform exterior ball conditions.
	\end{enumerate}
\end{defn}
One important result regarding the regularity of domains satisfying the uniform ball condition is the following
\begin{thm}[{\cite[Theorem 2.9]{GY12}}]
	\label{2T2}
	Let $\Omega\subset \mathbb{R}^N$ be a bounded open set which satisfies the uniform ball condition, then $\partial\Omega\in C^{1,1}$.
\end{thm}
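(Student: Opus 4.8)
The plan is to show that near each boundary point, in a neighbourhood whose size depends only on the uniform radius $r:=r_\Omega$, the set $\partial\Omega$ is the graph of a function obeying a two-sided quadratic estimate, and then to use that such an estimate characterises $C^{1,1}$-regularity. I would first assign to each $x_0\in\partial\Omega$ a well-defined outer normal. Let $B_r(y_i)\subset\Omega$ and $\overline{B_r(y_e)}\subset\mathbb{R}^N\setminus\Omega$ be interior and exterior balls touching $\partial\Omega$ at $x_0$, which exist by hypothesis with common radius $r$. The open balls $B_r(y_i)$ and $B_r(y_e)$ are disjoint, one lying in $\Omega$ and the other in its complement, so $|y_i-y_e|\geq 2r$; on the other hand $|y_i-x_0|=|y_e-x_0|=r$ forces $|y_i-y_e|\leq 2r$. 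Hence $|y_i-y_e|=2r$, the points $y_i,x_0,y_e$ are collinear with $x_0$ their midpoint, and both spheres share a tangent hyperplane at $x_0$. This singles out the unit vector $\nu(x_0):=(y_e-x_0)/r=(x_0-y_i)/r$, so that the interior and exterior ball centres are $x_0\mp r\nu(x_0)$.

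Next I would derive the quantitative estimate sandwiching the boundary between tangent spheres. For any $p,q\in\partial\Omega$ the point $q$ lies neither in the open interior ball at $p$ (contained in $\Omega$) nor in the open exterior ball at $p$ (whose points are interior to $\mathbb{R}^N\setminus\Omega$, while $q$ is a boundary point). Expanding the two resulting inequalities $|q-(p\mp r\nu(p))|\geq r$ and combining them gives
\[
\bigl|(q-p)\cdot\nu(p)\bigr|\leq\frac{1}{2r}|q-p|^2\qquad\text{for all }p,q\in\partial\Omega.
\]
This is the quantitative core: measured against the tangent hyperplane at $p$, the boundary deviates at most quadratically, with constant governed only by $r$.

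I would then pass to a graph. Working in rotated coordinates with $x_0=0$ and $\nu(0)=e_N$, the estimate confines $\partial\Omega$ to a thin quadratic slab about $\{x_N=0\}$ and keeps the normals of nearby boundary points close to $e_N$, yielding a representation $x_N=f(x')$ over a disc $\{|x'|<\rho\}$ with $\rho\sim r$, $f(0)=0$, and $\nu$ determining $\nabla f$. Rewriting the displayed inequality in these coordinates via $\nu=(-\nabla f,1)/\sqrt{1+|\nabla f|^2}$ turns it into
\[
\bigl|f(y')-f(x')-\nabla f(x')\cdot(y'-x')\bigr|\leq C|y'-x'|^2,
\]
with $C=C(r)$, valid for all $x',y'$ in the disc. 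A function obeying this two-sided bound is simultaneously semiconvex and semiconcave, hence has a Lipschitz gradient, i.e.\ $f\in C^{1,1}$; since $\rho$ and $C$ depend only on $r$ this holds uniformly over $\partial\Omega$, giving $\partial\Omega\in C^{1,1}$.

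I expect the genuine difficulty to be the graph step rather than the passage to $C^{1,1}$. The containment inequality controls the deviation of $\partial\Omega$ from each tangent plane, but extracting from it that the boundary is single-valued over the whole disc $\{|x'|<\rho\}$—existence and uniqueness of $f(x')$ for every such $x'$—requires ruling out a second sheet of $\partial\Omega$ folding into the cylinder. The estimate does show that two boundary points sharing the same $x'$ and lying in a thin slab force the normal at the lower one to be nearly horizontal, contradicting its proximity to $e_N$; but making this rigorous needs a continuity argument for $\nu$ (equivalently, an analysis of the nearest-point projection onto the tangent plane) that is not immediate from the pointwise inequality alone. Once the uniform graph representation is secured, the semiconvex–semiconcave characterisation of $C^{1,1}$ is standard.
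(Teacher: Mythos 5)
The paper itself offers no proof of this statement---it is quoted directly from \cite[Theorem 2.9]{GY12}---so your proposal can only be measured against the standard argument, whose skeleton you have correctly reproduced. Your first two steps are sound: the collinearity argument ($|y_i-y_e|=2r$ forcing $x_0$ to be the midpoint) pins down a unique outward normal $\nu(x_0)$, and excluding $q\in\partial\Omega$ from both balls at $p$ gives the two-sided bound $|(q-p)\cdot\nu(p)|\leq |q-p|^2/(2r)$. The final step is also fine: once the graph representation is secured, that inequality, divided by the vertical component of $\nu$, simultaneously proves differentiability of $f$, identifies $\nabla f(x')=-\nu'/\nu_N$, and gives the uniform two-sided Taylor bound, whence $f\in C^{1,1}$ by the semiconvex/semiconcave characterisation.

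The genuine gap is the one you flag yourself: single-valuedness of the graph. As written, you only ever play the two balls at a \emph{single} point $p$ against another boundary point $q$; such pointwise information cannot relate $\nu(q_1)$ to $\nu(q_2)$ at distinct points and therefore cannot exclude a second sheet. The missing idea is a \emph{cross} comparison: the interior ball at $p$ and the exterior ball at $q$ are disjoint open balls of radius $r$, so their centres satisfy $|(p-q)-r(\nu(p)+\nu(q))|\geq 2r$. Expanding and using $|\nu(p)+\nu(q)|^2=4-|\nu(p)-\nu(q)|^2$ yields
\[
r^2|\nu(p)-\nu(q)|^2\leq |p-q|^2-2r\,(p-q)\cdot\bigl(\nu(p)+\nu(q)\bigr),
\]
and exchanging the roles of $p$ and $q$ gives the same estimate with the sign of the last term reversed; adding the two produces the Lipschitz bound
\[
|\nu(p)-\nu(q)|\leq \frac{|p-q|}{r}\qquad\text{for all }p,q\in\partial\Omega.
\]
This is exactly the ``continuity argument for $\nu$'' you identified as missing, and it closes the proof: in a cylinder of radius and height $cr$ about $x_0$ (coordinates with $\nu(x_0)=e_N$), every boundary point $q$ satisfies $\nu_N(q)\geq 1-\sqrt{2}\,c$, while two boundary points $q_1,q_2$ on the same vertical line would give, by your quadratic estimate applied at $p=q_1$, $\nu_N(q_1)\leq |q_2-q_1|/(2r)\leq c$, a contradiction for $c<(1+\sqrt{2})^{-1}$; existence of at least one boundary point on each vertical line follows from an intermediate-value argument run between the interior and exterior balls at $x_0$. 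With this lemma inserted your argument is complete, and in fact the last step can then be shortened, since $\nabla f=-\nu'/\nu_N$ is automatically Lipschitz once $\nu$ is.
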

Before we can talk about compactness we have to introduce a suitable notion of distance. For that matter we recall some standard definitions.
\begin{defn}[Hausdorff distance]
	\label{2D3}
	$\quad$
	\begin{enumerate}
		\item Let $\emptyset\neq K_1,K_2\subset \mathbb{R}^N$ be compact sets. The Hausdorff distance between $K_1$ and $K_2$ is defined by
		\[
		\delta\left(K_1,K_2\right):=\max\left\{\sup_{x\in K_1}\operatorname{dist}(x,K_2),\sup_{x\in K_2}\operatorname{dist}(x,K_1)\right\}.
		\]
		The function $\delta$ defines a metric on the set of non-empty compact subsets of $\mathbb{R}^N$.
		\item Given some $R_0>0$ and open sets $\Omega_1,\Omega_2\subset \overline{B_{R_0}(0)}\subset \mathbb{R}^N$ we define the Hausdorff distance between $\Omega_1$ and $\Omega_2$ relative to $R_0$ by
		\[
		\rho(\Omega_1,\Omega_2)\equiv \rho_{R_0}(\Omega_1,\Omega_2):=\delta\left(\overline{B_{R_0}(0)}\setminus \Omega_1,\overline{B_{R_0}(0)}\setminus \Omega_2\right).
		\]
		This defines a metric on the set of open subsets of $B_{R_0}(0)$.
	\end{enumerate}
\end{defn}
We have the following compactness result
\begin{thm}[{\cite[Theorem 2.8]{GY12}}]
	\label{2T4}
	Let $r_0>0$ and $R_0>2r_0$ be given. Then $\rho_{3R_0}$ turns the set of open sets $\Omega$ which are contained in $B_{R_0}(0)$ and satisfy the uniform ball condition with $r_{\Omega}\geq r_0$ into a compact metric space.
\end{thm}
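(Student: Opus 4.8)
The plan is to prove sequential compactness, which is equivalent to compactness in a metric space. Since by definition $\rho_{3R_0}(\Omega_1,\Omega_2)=\delta\left(\overline{B_{3R_0}(0)}\setminus\Omega_1,\overline{B_{3R_0}(0)}\setminus\Omega_2\right)$, convergence in $\rho_{3R_0}$ is exactly Hausdorff convergence of the compact complements. So I would take an arbitrary sequence $(\Omega_n)_n$ in the class, form the compact complements $K_n:=\overline{B_{3R_0}(0)}\setminus\Omega_n$, and apply the Blaschke selection theorem to extract a subsequence with $K_n\to K$ in the metric $\delta$ for some compact $K\subseteq\overline{B_{3R_0}(0)}$. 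The natural candidate limit is $\Omega:=B_{3R_0}(0)\setminus K$, which is open; since each $K_n$ contains the annulus $\overline{B_{3R_0}(0)}\setminus B_{R_0}(0)$, so does $K$, whence $\Omega\subseteq B_{R_0}(0)$ and $\overline{B_{3R_0}(0)}\setminus\Omega=K$, giving $\rho_{3R_0}(\Omega_n,\Omega)=\delta(K_n,K)\to0$ for free. The choice of ambient radius $3R_0$, together with $R_0>2r_0$, is precisely what guarantees that not only the domains but also all their interior and exterior tangent balls of radius $r_0$ sit inside $\overline{B_{3R_0}(0)}$, so that $\delta$ genuinely records their convergence.

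The crux, and the only real work, is to show that the candidate limit $\Omega$ again belongs to the class, i.e.\ satisfies the uniform ball condition with $r_\Omega=r_0$. First I would reduce to tangent balls of radius exactly $r_0$: if $B_r(y)\subseteq\Omega_n$ with $r\geq r_0$ and $x_0\in\partial B_r(y)$, then the concentric shrink toward $x_0$ produces a radius-$r_0$ ball inside $\Omega_n$ still tangent at $x_0$, and analogously for exterior balls. For the interior condition at a point $x_0\in\partial\Omega$, I would first produce boundary points $x_n\in\partial\Omega_n$ with $x_n\to x_0$: picking $a\in\Omega$ near $x_0$ (so $a\notin K$, hence $a\in\Omega_n$ for large $n$ by Hausdorff convergence of the $K_n$) and $c_n\in K_n$ with $c_n\to x_0$ (possible since $x_0\in K$), the segment $[a,c_n]$ crosses $\partial\Omega_n$, and a diagonal argument delivers $x_n\to x_0$. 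Each $\Omega_n$ then supplies an interior ball $B_{r_0}(y_n)\subseteq\Omega_n$ tangent at $x_n$; the centers $y_n$ are bounded, so along a further subsequence $y_n\to y$ with $|x_0-y|=r_0$.

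It then remains to verify $B_{r_0}(y)\subseteq\Omega$. For fixed $z\in B_{r_0}(y)$ one has, for large $n$, a ball $B_\varepsilon(z)\subseteq B_{r_0}(y_n)\subseteq\Omega_n$ with $\varepsilon=\tfrac12(r_0-|z-y|)>0$, so $\operatorname{dist}(z,K_n)\geq\varepsilon$; since $|\operatorname{dist}(z,K_n)-\operatorname{dist}(z,K)|\leq\delta(K_n,K)\to0$, this forces $\operatorname{dist}(z,K)>0$, i.e.\ $z\notin K$ and hence $z\in\Omega$. The exterior condition is entirely symmetric, working with the closed tangent balls $\overline{B_{r_0}(y_n)}\subseteq K_n$ and showing the limit ball $\overline{B_{r_0}(y)}$ is disjoint from $\Omega$, where one uses that the translate $z_n=z+(y_n-y)$ of any $z\in\overline{B_{r_0}(y)}$ lies in $K_n$ and converges to $z$, so $z\in K$. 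Together these give the uniform ball condition for $\Omega$ with radius $r_0$, so $\Omega$ lies in the class and the extracted subsequence converges to it. The main obstacle is exactly this stability step: Hausdorff convergence of complements alone does not control boundaries, and it is the uniform interior and exterior ball conditions, preventing any pinching or degeneration of the boundary in the limit, that make the tangent balls pass to the limit and pin $x_0$ onto the limiting sphere.
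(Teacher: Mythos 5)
This theorem is imported by the paper verbatim from \cite[Theorem 2.8]{GY12} and never proved internally, so there is no in-paper argument to compare against; judged on its own merits, your proof is correct and complete, and it follows the standard strategy for such compactness statements (which is also, in substance, how the cited reference proceeds). You pass to the compact complements $K_n=\overline{B_{3R_0}(0)}\setminus\Omega_n$, extract a Hausdorff-convergent subsequence by Blaschke selection, identify the candidate $\Omega=B_{3R_0}(0)\setminus K$ correctly (the observation that $K$ inherits the annulus $\overline{B_{3R_0}(0)}\setminus B_{R_0}(0)$ is exactly what gives both $\Omega\subseteq B_{R_0}(0)$ and $\overline{B_{3R_0}(0)}\setminus\Omega=K$, hence $\rho_{3R_0}(\Omega_n,\Omega)\to0$), and then do the only nontrivial work: stability of the uniform ball condition. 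The three points where such arguments typically break are all handled soundly: you manufacture boundary points $x_n\in\partial\Omega_n$ with $x_n\to x_0$ via the segment-crossing and diagonalisation argument rather than assuming convergence of boundaries (which Hausdorff convergence of complements does not provide); you normalise tangent balls to radius exactly $r_0$ before taking limits so the centres converge along a further subsequence, which is harmless since the conclusion concerns only $\Omega$ and $x_0$; and you invoke $R_0>2r_0$ precisely where it is needed, to keep the exterior tangent balls inside $\overline{B_{3R_0}(0)}$ so that $\overline{B_{r_0}(y_n)}\subseteq K_n$ and the translation trick $z_n=z+(y_n-y)$ is legitimate. The two limiting verifications, $\operatorname{dist}(z,K)\geq\operatorname{dist}(z,K_n)-\delta(K_n,K)>0$ for interior points and closedness of $K$ for exterior points, are both valid, so the limit lies in the class and sequential compactness follows.
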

In view of the regularity result, \Cref{2T2}, we have to make sense of our optimisation problems among the class of $C^{1,1}$-regular domains. Here we will define them more generally on the space of $C^{0,1}$-regular domains. Note that since all convex domains are Lipschitz domains, this in particular allows us to make sense of the corresponding optimisation problems among convex domains.
\newline
\newline
Before we give a precise definition of our objective functions we define two spaces of interest. In the following $\Omega\subset\mathbb{R}^3$ is a bounded $C^{0,1}$-regular open set
\begin{gather}
	\label{2E1}
	\mathcal{V}^{\operatorname{T}}_{\operatorname{div}=0}(\Omega):=\left\{B\in L^2(\Omega,\mathbb{R}^3)\left|\int_{\Omega}B\cdot \operatorname{grad}(f)d^3x=0\right.\text{ for all }f\in H^1(\Omega)\right\},
	\\
	\label{2E2}
	\mathcal{V}^{\operatorname{T},\operatorname{ZF}}_{\operatorname{div}=0}(\Omega):=\left\{B\in \mathcal{V}^{\operatorname{T}}_{\operatorname{div}=0}(\Omega)\left|\int_{\Omega}B\cdot Yd^3x=0\right.\text{ for all }Y\in L^2(\Omega,\mathbb{R}^3),\operatorname{curl}(Y)=0=\operatorname{div}(Y) \right\}
\end{gather}
where $Y$ being curl- and divergence-free is understood in the weak sense. Here T stands for tangent and ZF stands for zero flux because the space introduced in (\ref{2E1}) coincides with the space of divergence-free fields which are tangent to the boundary, while the space introduced in (\ref{2E2}) is the space of divergence-free fields tangent to the boundary which satisfy the zero flux condition. Indeed, if $\Omega$ is smooth enough it follows from the Hodge-decomposition theorem \cite[Corollary 3.5.2]{S95} that each smooth vector field $B\in \mathcal{V}^{\operatorname{T},\operatorname{ZF}}_{\operatorname{div}=0}(\Omega)$ admits a vector potential $A$ which is normal to the boundary. So that if $S\subset \Omega$ is a surface with $\partial S\subset \partial\Omega$ it follows from Stokes theorem that
\[
\int_S B\cdot \mathcal{N}d\sigma=\int_{S}\operatorname{curl}(A)\cdot \mathcal{N}d\sigma=\int_{\partial S}Ad\gamma=0
\]
because $A$ is normal to the boundary. We now define our objective functions, where we recall that we set $\mathcal{M}(B):=\int_{\Omega}B^2d^3x$ to be the magnetic energy of a square integrable vector field and where the helicity $\mathcal{H}(B)$ of any element in $\mathcal{V}^{\operatorname{T}}_{\operatorname{div}=0}(\Omega)$ is defined by the formula (\ref{E1}).
\begin{defn}
	\label{2D5}
	Let $\Omega\subset \mathbb{R}^3$ be a bounded, open set with $C^{0,1}$-boundary. Then we define
	\begin{gather}
		\label{2E3}
		\nu(\Omega):=\inf_{\substack{B\in \mathcal{V}^{\operatorname{T}}_{\operatorname{div}=0}(\Omega)\\ \mathcal{H}(B)>0}}\frac{\mathcal{M}(B)}{\mathcal{H}(B)},
			\\
			\label{2E4}
			\eta(\Omega):=\inf_{\substack{B\in \mathcal{V}^{\operatorname{T},\operatorname{ZF}}_{\operatorname{div}=0}(\Omega)\\\mathcal{H}(B)>0}}\frac{\mathcal{M}(B)}{\mathcal{H}(B)}.
	\end{gather}
\end{defn}
Here, in view of our abstract framework \Cref{2T6}, we adapt the convention to view our optimisation problem as a minimisation problem, c.f. (\ref{E4}).
\newline
\newline
Lastly, we introduce the following three collections of subsets of $\mathbb{R}^N$. Here we let $r_0>0$ and $V\geq \omega_Nr^N_0$ be any fixed constants, where $\omega_N$ denotes the volume of the unit ball
\begin{gather}
	\label{2E5}
	\operatorname{Sub}_c(\mathbb{R}^N):=\{\Omega\subset \mathbb{R}^N\mid\Omega\text{ open, bounded and }\partial\Omega\in C^{1,1}\},
	\\
	\label{2E6}
	S_{r_0}:=\{\Omega\in \operatorname{Sub}_c(\mathbb{R}^N)\mid\Omega\text{ satisfies the uniform ball condition with }r_\Omega\geq r_0\}\text{, }
	\\
	\label{2E7}
	S^V_{r_0}:=\{\Omega\in S_{r_0}\mid|\Omega|=V\},
\end{gather}
where $|\Omega|$ denotes the volume of $\Omega$. We note that the condition $V\geq \omega_Nr^N_0$ is necessary and sufficient to guarantee that $S^V_{r_0}\neq \emptyset$.
\subsection{Statement of results}
We first state our main abstract existence result
\begin{thm}[Abstract framework]
	\label{2T6}
	Let $\mu:\operatorname{Sub}_c(\mathbb{R}^N)\rightarrow (0,\infty)$ be a function with the following properties
	\begin{enumerate}
		\item $\forall \Omega_1,\Omega_2\in \operatorname{Sub}_c(\mathbb{R}^N)$ with $\Omega_1\subseteq \Omega_2$ we have $\mu(\Omega_2)\leq \mu(\Omega_1)$ (``reverse monotonicity").
		\item For every $\Omega\in \operatorname{Sub}_c(\mathbb{R}^N)$ and $X\in C^\infty_c(\mathbb{R}^N,\mathbb{R}^N)$, if $X$ is everywhere outward pointing along $\partial\Omega$ and we let $\psi_t$ denote the global flow of $X$, then $\lim_{t\searrow0}\mu(\psi_t(\Omega))=\mu(\Omega)$. (``outward flow continuity").
		\item There exists a locally bounded function $f:\mathbb{R}_{> 0}\rightarrow \mathbb{R}_{\geq 0}$ and a function $g:\mathbb{R}_{>0}\rightarrow \mathbb{R}_{\geq 0}$ with $g(s)\rightarrow 0$ as $s\rightarrow\infty$ such that for all $\Omega_1,\Omega_2\in \operatorname{Sub}_c(\mathbb{R}^N)$ with $\overline{\Omega}_1\cap \overline{\Omega}_2=\emptyset$ we have $\left|\frac{1}{\mu\left(\Omega_1\cup\Omega_2\right)}-\frac{1}{\min\{\mu(\Omega_1),\mu(\Omega_2)\}}\right|\leq f\left(\max\{|\Omega_1|,|\Omega_2|\}\right)g\left(\operatorname{dist}(\Omega_1,\Omega_2)\right)$. Further suppose that we have $\mu\left(\Omega_1\cup B_r\right)=\min\{\mu(\Omega_1),\mu(B_r)\}$ for every Euclidean ball $B_r$ which has positive distance to $\Omega_1$. (``approximate disjoint minimality")
		\item For every $x\in \mathbb{R}^N$ and for every $\Omega\in \operatorname{Sub}_c(\mathbb{R}^N)$ we have $\mu(x+\Omega)=\mu(\Omega)$. (``translation invariance")
	\end{enumerate}
	Then given any $r_0>0$ and $V\geq \omega_N r^N_0>0$, where $\omega_N$ denotes the volume of the unit ball, the restriction $\mu|_{S^V_{r_0}}$ admits a global minimum.
\end{thm}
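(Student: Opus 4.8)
The plan is to run the direct method of the calculus of variations inside the compact metric space furnished by \Cref{2T4}, using the four structural properties to replace the (generally unavailable) continuity of $\mu$ by a one-sided estimate. Write $m:=\inf_{S^V_{r_0}}\mu$; this is finite and nonnegative since $\mu>0$ and $S^V_{r_0}\neq\emptyset$ (as $V\ge\omega_Nr_0^N$). Fix a minimising sequence $(\Omega_n)_n\subset S^V_{r_0}$ with $\mu(\Omega_n)\to m$. The first task is to bring this sequence into one fixed ball, and this is precisely where disjoint minimality and translation invariance enter. Each $\Omega_n$ has finitely many connected components: every component contains an interior ball of radius $r_0$ and hence has volume at least $\omega_Nr_0^N$, so there are at most $k_0:=\lfloor V/(\omega_Nr_0^N)\rfloor$ of them. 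The uniform exterior ball condition moreover forces distinct components to have disjoint closures (indeed pairwise distance $\ge 2r_0$), so iterating disjoint minimality gives $\mu(\Omega_n)=\min_i\mu(C_i^n)$ over the components $C_i^n$.

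Next I would prove a uniform diameter bound for each component: projecting a component of diameter $d$ onto the segment realising its diameter and placing interior balls of radius $r_0$ at points whose projections are spaced more than $4r_0$ apart produces $\gtrsim d/r_0$ pairwise disjoint balls inside the component, whence $d\le C(N)V/r_0^{N-1}=:D_0$. Having bounded the number and the diameters of the components, I would translate them apart (keeping pairwise distances $\ge 3r_0$, which preserves membership in $S_{r_0}$) and repack them inside a fixed ball $B_{R_0}(0)$ with $R_0=R_0(k_0,D_0,r_0)$ independent of $n$. By translation invariance each $\mu(C_i^n)$ is unchanged, and by disjoint minimality the repacked domain $\tilde\Omega_n$ satisfies $\mu(\tilde\Omega_n)=\mu(\Omega_n)$, while $\tilde\Omega_n\in S^V_{r_0}$ and $\tilde\Omega_n\subset B_{R_0}(0)$.

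Now I would invoke \Cref{2T4}: the family of $\Omega\subset B_{R_0}(0)$ satisfying the uniform ball condition with $r_\Omega\ge r_0$ is $\rho_{3R_0}$-compact, so after passing to a subsequence $\tilde\Omega_n\to\Omega_*$ in $\rho_{3R_0}$, with $\Omega_*$ again a uniform ball domain with $r_{\Omega_*}\ge r_0$; by \Cref{2T2} we get $\partial\Omega_*\in C^{1,1}$, so $\Omega_*\in S_{r_0}$. At this point I would use two analytic facts about uniform ball domains, both consequences of the uniform $C^{1,1}$ bounds (equivalently, of uniform control on the oriented distance functions): that $\rho_{3R_0}$-convergence upgrades to Hausdorff convergence of the closures $\overline{\tilde\Omega_n}\to\overline{\Omega_*}$, and that volumes pass to the limit, $|\tilde\Omega_n|\to|\Omega_*|$. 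The latter yields $|\Omega_*|=V$, hence $\Omega_*\in S^V_{r_0}$ and in particular $\mu(\Omega_*)\ge m$.

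It remains to prove $\mu(\Omega_*)\le m$, and this is the crux, handled by an outward-flow comparison that converts reverse monotonicity and outward flow continuity into lower semicontinuity. Fix $\varepsilon>0$, choose $X\in C_c^\infty(\mathbb{R}^N,\mathbb{R}^N)$ everywhere outward pointing along $\partial\Omega_*$ (for instance a cut-off extension of the outer unit normal), and let $\psi_t$ be its flow. Each $\psi_t(\Omega_*)\in\operatorname{Sub}_c(\mathbb{R}^N)$, and for small $t>0$ the map $\psi_t$ pushes the boundary strictly outward, so $\psi_t(\Omega_*)$ contains an open neighbourhood $N_{\delta_t}(\overline{\Omega_*})$ of $\overline{\Omega_*}$ for some $\delta_t>0$. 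By outward flow continuity I may fix $t>0$ small enough that $\mu(\psi_t(\Omega_*))\ge\mu(\Omega_*)-\varepsilon$. By the Hausdorff convergence of closures, $\overline{\tilde\Omega_n}\subset N_{\delta_t}(\overline{\Omega_*})\subset\psi_t(\Omega_*)$ for all large $n$, so $\tilde\Omega_n\subset\psi_t(\Omega_*)$ and reverse monotonicity gives $\mu(\psi_t(\Omega_*))\le\mu(\tilde\Omega_n)$. Letting $n\to\infty$ yields $\mu(\psi_t(\Omega_*))\le m$, hence $\mu(\Omega_*)\le m+\varepsilon$; since $\varepsilon$ was arbitrary, $\mu(\Omega_*)\le m$. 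Combined with $\mu(\Omega_*)\ge m$ this shows $\Omega_*$ realises the minimum. The main obstacle is exactly this lower-semicontinuity step: genuine continuity of $\mu$ under $\rho$-convergence is unavailable, and the argument only closes because outward flow continuity together with reverse monotonicity let one sandwich $\mu(\Omega_*)$ between $\mu(\psi_t(\Omega_*))$ and $\liminf_n\mu(\tilde\Omega_n)$; securing the containment $\tilde\Omega_n\subset\psi_t(\Omega_*)$ (that is, upgrading complement-Hausdorff convergence to Hausdorff convergence of closures) and the volume continuity are the technical points that crucially exploit the uniform $C^{1,1}$ regularity.
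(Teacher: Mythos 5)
Your proposal is correct and follows essentially the same route as the paper's proof: confine a minimising sequence to a fixed ball by translating connected components (using disjoint minimality and translation invariance, as in the paper's Lemma 3.1/Corollary 3.2), apply the compactness theorem for uniform ball domains, and then sandwich $\mu(\Omega_*)$ via reverse monotonicity applied to the containment $\tilde{\Omega}_n\subset\psi_t(\Omega_*)$ combined with outward flow continuity. The two ``analytic facts'' you assert are precisely what the paper draws from \cite{GY12} and \cite{DZ11}: your Hausdorff convergence of closures is the exterior $\Gamma$-property (\cite[Theorem 2.10]{GY12}), and your volume continuity is replaced in the paper by standard lower semicontinuity of volume together with the inclusion $\Omega_{n(m)}\subset\psi_{1/m}(\Omega)$ and a change of variables, so the arguments match in all essentials.
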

In section 4 we will verify that the functions $\nu$ and $\eta$ defined in (\ref{2E3}) and (\ref{2E4}) satisfy conditions (i)-(iv) of \Cref{2T6} respectively which immediately yields the following corollary
\begin{cor}
	\label{2C7}
	Let $r_0>$ and $V\geq \omega_3r^3_0$, then there exist $\Omega_1,\Omega_2\in S^V_{r_0}$ such that
	\[
	\nu(\Omega_1)=\inf_{\Omega\in S^V_{r_0}}\nu(\Omega)\text{ and }\eta(\Omega_2)=\inf_{\Omega\in S^V_{r_0}}\eta(\Omega).
	\]
\end{cor}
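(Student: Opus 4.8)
The plan is to deduce \Cref{2C7} directly from \Cref{2T6} by taking $\mu=\nu$ and $\mu=\eta$, so the entire task reduces to verifying that each of these two functions satisfies hypotheses (i)--(iv). I would organise the verification around the single technical tool that underlies almost everything: \emph{extension by zero}. If $B\in \mathcal{V}^{\operatorname{T}}_{\operatorname{div}=0}(\Omega)$ and $\Omega\subset\Omega'$, then the zero-extension $\widetilde{B}$ lies in $\mathcal{V}^{\operatorname{T}}_{\operatorname{div}=0}(\Omega')$, because for $f\in H^1(\Omega')$ one has $\int_{\Omega'}\widetilde{B}\cdot\operatorname{grad}(f)=\int_{\Omega}B\cdot\operatorname{grad}(f|_\Omega)=0$; the same computation, testing instead against the restriction of an $L^2$ curl- and divergence-free field, shows that zero-extension also preserves membership in $\mathcal{V}^{\operatorname{T},\operatorname{ZF}}_{\operatorname{div}=0}$, since the restriction of a weakly harmonic field to a subdomain is again weakly harmonic. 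Moreover $\mathcal{M}(\widetilde{B})=\mathcal{M}(B)$, and because $\operatorname{BS}$ integrates only over the support of its argument, $\operatorname{BS}(\widetilde{B})=\operatorname{BS}(B)$ on all of $\mathbb{R}^3$, whence $\mathcal{H}(\widetilde{B})=\mathcal{H}(B)$.

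With this in hand, property (i) (reverse monotonicity) is immediate for both $\nu$ and $\eta$: every admissible field on the smaller domain yields, after zero-extension, an admissible field on the larger domain with the identical quotient $\mathcal{M}/\mathcal{H}$, so enlarging the domain can only enlarge the competitor set and hence lower the infimum. Property (iv) (translation invariance) is equally routine, since $\mathcal{M}$ is manifestly translation invariant and the Biot--Savart kernel depends only on $x-y$, so $\mathcal{H}$ is unchanged under $\Omega\mapsto x+\Omega$ after the corresponding change of variables in \eqref{E1}--\eqref{E2}.

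The heart of the matter is property (iii) (disjoint minimality). The inequality $\mu(\Omega_1\cup\Omega_2)\le\min\{\mu(\Omega_1),\mu(\Omega_2)\}$ follows at once from (i), since $\Omega_i\subset\Omega_1\cup\Omega_2$. For the reverse inequality I would write a competitor on the union as $B=B_1+B_2$ with $B_i$ supported in $\Omega_i$, note that $\mathcal{M}(B)=\mathcal{M}(B_1)+\mathcal{M}(B_2)$, and expand the helicity as $\mathcal{H}(B)=\mathcal{H}(B_1)+\mathcal{H}(B_2)+2\int_{\Omega_1}B_1\cdot\operatorname{BS}(B_2)$. The key structural observation is that $\operatorname{BS}(B_2)$ restricted to $\Omega_1$ is curl- and divergence-free there, since $B_2$ vanishes on $\overline{\Omega}_1$; i.e. it is a harmonic field on $\Omega_1$. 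For $\eta$ the admissible $B_1$ is a zero-flux field and hence $L^2$-orthogonal to all such harmonic fields, so the cross term vanishes, the helicity is additive, and a one-line Rayleigh-quotient estimate gives $\mathcal{H}(B)\le \max\{\lambda(\Omega_1),\lambda(\Omega_2)\}\,\mathcal{M}(B)$, i.e. $\eta(\Omega_1\cup\Omega_2)\ge\min\{\eta(\Omega_1),\eta(\Omega_2)\}$. For $\nu$, however, $B_1$ is only tangent and divergence-free, so its projection onto the Neumann harmonic (cohomology) fields of $\Omega_1$ need not vanish and the cross term survives; controlling this mutual-helicity contribution and showing it cannot raise the supremum is the delicate point, and I expect this to be the main obstacle of the whole verification.

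Finally, for property (ii) (outward flow continuity) I would use that an outward-pointing $X$ gives $\Omega\subset\psi_t(\Omega)$ for small $t>0$, so reverse monotonicity yields $\limsup_{t\searrow0}\mu(\psi_t(\Omega))\le\mu(\Omega)$. For the matching lower bound I would transport near-minimizers on $\psi_t(\Omega)$ back to $\Omega$ by the flux-preserving (Piola-type) pushforward $B\mapsto \tfrac{1}{\det D\psi_t}\,(D\psi_t)\,B\circ\psi_t^{-1}$, which carries $\mathcal{V}^{\operatorname{T}}_{\operatorname{div}=0}$ (resp. $\mathcal{V}^{\operatorname{T},\operatorname{ZF}}_{\operatorname{div}=0}$) isomorphically between the two domains, and then show that the induced distortion of $\mathcal{M}$ and of $\mathcal{H}$ tends to $1$ as $t\searrow0$, uniformly over the normalised competitors. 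The technical content here is a uniform estimate on how $\mathcal{H}$, and in particular the operator $\operatorname{BS}$, behaves under the change of variables $\psi_t$ as $\psi_t\to\operatorname{id}$ in $C^1$; this, together with the $\nu$ cross-term analysis above, is where essentially all the analytic work lies, after which both $\nu$ and $\eta$ satisfy (i)--(iv) and \Cref{2T6} yields the asserted minimisers $\Omega_1,\Omega_2\in S^V_{r_0}$.
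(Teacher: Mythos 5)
Your overall strategy is exactly the paper's: reduce \Cref{2C7} to \Cref{2T6} and verify hypotheses (i)--(iv) for $\nu$ and $\eta$. Your zero-extension argument for (i), your translation argument for (iv), and your plan for (ii) --- the volume-corrected pushforward $B\mapsto\frac{1}{\det D\psi_t}\,(D\psi_t)\,B\circ\psi_t^{-1}$ as an isomorphism of the admissible spaces, combined with reverse monotonicity and a sandwich argument --- all coincide with the paper's proof. For (ii) the paper settles your deferred ``distortion'' estimates as follows: the helicity is \emph{exactly} preserved under this pushforward (one pulls back the Biot--Savart potential and uses that helicity may be computed with an arbitrary $L^3$ vector potential, \cite[Corollary 5.5.1]{G20Diss}), so the only quantitative input is $\mathcal{M}_{\Omega_t}(B_t)\geq(1-Ct)\,\mathcal{M}_{\Omega}(B)$; no uniform estimate on $\operatorname{BS}$ under the change of variables is needed. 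Your direct orthogonality proof of disjoint minimality for $\eta$ is complete (modulo the slip that the Rayleigh-quotient bound should use the zero-flux supremum $1/\eta(\Omega_i)$ rather than $\lambda(\Omega_i)$) and is, if anything, more self-contained than the paper's appeal to the spectral theory of the curl operator.

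The genuine gap is hypothesis (iii) for $\nu$, which you explicitly leave open: your decomposition only yields $\nu(\Omega_1\cup\Omega_2)\leq\min\{\nu(\Omega_1),\nu(\Omega_2)\}$ (which already follows from (i)), while the reverse inequality is blocked by the mutual-helicity term $2\int_{\Omega_1}B_1\cdot\operatorname{BS}(B_2)\,d^3x$, which indeed need not vanish when $B_1$ has a nontrivial harmonic (cohomology) component --- think of two linked flux-carrying solid tori. The paper's missing ingredient is spectral, not a direct estimate of this term: by \cite[Theorem D and subsequent comments]{CDG01} and \cite{CDGT002}, the infimum in (\ref{2E3}) is \emph{attained} by eigenfields of the compact, self-adjoint modified Biot--Savart operator (Biot--Savart followed by $L^2$-projection onto $\mathcal{V}^{\operatorname{T}}_{\operatorname{div}=0}(\Omega)$), and $\nu(\Omega)=1/\sigma_+(\Omega)$ where $\sigma_+(\Omega)$ is its largest positive eigenvalue; the paper then argues that the spectrum of this operator on $\Omega_1\cup\Omega_2$ is the union of the two spectra, whence $\sigma_+(\Omega_1\cup\Omega_2)=\max\{\sigma_+(\Omega_1),\sigma_+(\Omega_2)\}$ and disjoint minimality for $\nu$ follows. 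Your instinct that the cross term is the crux is sound --- it is precisely the off-diagonal coupling one must dispose of in that spectral decomposition --- but without this eigenvalue characterisation (or some other control of the mutual helicity) your verification of (iii) for $\nu$ is incomplete, so \Cref{2T6} applies only to $\eta$ and your proposal proves only the existence of $\Omega_2$, not of $\Omega_1$, in \Cref{2C7}.
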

\begin{rem}
	It is conjectured in \cite[Section M]{CDGT002} that an optimal shape for the optimisation problem (\ref{E4}) may consist of a solid torus whose major radius equals its minor radius and therefore develops a singularity and that no smooth optimal shape exists. Based on this conjecture one may conjecture that the optimisers for the corresponding uniform ball problem which we prove to exist may consist of solid tori whose major radius $R$ and minor radius $r$ approach each other while maintaining a smallest distance to fit in a ball of radius $r_0$ within the whole of the solid torus, i.e. $R=r+r_0$. However, no numerical experiments have been conducted in the course of the present work to further support this conjecture and it would be interesting to investigate the optimal shapes from a numerical perspective, see also \cite{A-RCRVV18} for numerical methods to compute the first curl eigenvalue on general domains of $\mathbb{R}^3$.
\end{rem}
\section{Proof of the abstract existence result}
Our goal will be to eventually exploit the compactness property \Cref{2T4}. In order to achieve that we have to show that we can find a minimising sequence $(\Omega_n)_n$ of $\mu$ within $S^V_{r_0}$ whose members are all contained in the same bounded set. That is, as explained in the introduction, we have to rule out the possibility that (parts of) our domains run off to infinity.
\begin{lem}
	\label{3L1}
	Given $r_0>0$ let $(\Omega_n)_n\subset S_{r_0}$ be a sequence of connected open sets. If $\operatorname{diam}(\Omega_n)\rightarrow\infty$, then $|\Omega_n|\rightarrow\infty$.
\end{lem}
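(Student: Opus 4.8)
The plan is to exploit the uniform interior ball condition to show that a \emph{connected} domain of large diameter must contain many pairwise disjoint balls of the fixed radius $r_0$, whence its volume is large. Connectedness is essential here (two tiny balls placed far apart have large diameter but small volume), and it enters through path-connectedness.

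First I would record the key ``thickness'' consequence of the uniform interior ball condition: if $\Omega\in S_{r_0}$ then every point $x\in\Omega$ is contained in some open ball $B_{r_0}(y)\subseteq\Omega$. Indeed, if $\operatorname{dist}(x,\partial\Omega)\geq r_0$ one takes $y=x$. If $0<d:=\operatorname{dist}(x,\partial\Omega)<r_0$, pick a nearest boundary point $x_0\in\partial\Omega$; since $\partial\Omega\in C^{1,1}$ by \Cref{2T2}, the segment $[x,x_0]$ runs along the inner unit normal $\nu$ at $x_0$, so $x=x_0+d\nu$, and the interior ball furnished by \Cref{2D1} at $x_0$ has its centre on the same inner normal, i.e.\ it is $B_{r_0}(x_0+r_0\nu)\subseteq\Omega$. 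Then $|x-(x_0+r_0\nu)|=r_0-d<r_0$, so $x$ lies in this ball.

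Next, using that an open connected subset of $\mathbb{R}^N$ is path-connected, I would, for each $n$, choose $a_n,b_n\in\Omega_n$ with $|a_n-b_n|\geq\tfrac12\operatorname{diam}(\Omega_n)$ and a continuous path $\gamma_n\colon[0,1]\to\Omega_n$ joining them. Writing $u_n:=(b_n-a_n)/|b_n-a_n|$ and $f_n:=u_n\cdot\gamma_n$, the map $f_n$ is continuous and its image is an interval of length at least $u_n\cdot(b_n-a_n)=|b_n-a_n|\geq\tfrac12\operatorname{diam}(\Omega_n)$. Choosing inside this interval the equally spaced values $v_0<\dots<v_{m_n}$ with gap $4r_0$, where $m_n:=\lfloor\operatorname{diam}(\Omega_n)/(8r_0)\rfloor$, the intermediate value theorem yields parameters $t_j$ with $f_n(t_j)=v_j$; the points $q_j:=\gamma_n(t_j)\in\Omega_n$ then satisfy $|q_i-q_j|\geq|u_n\cdot(q_i-q_j)|=4r_0|i-j|\geq 4r_0$ for $i\neq j$.

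Finally I would combine the two ingredients. By the thickness property each $q_j$ lies in a ball $B_{r_0}(y_j)\subseteq\Omega_n$, and since $|q_i-q_j|\geq4r_0$ while $|q_j-y_j|<r_0$, the centres obey $|y_i-y_j|>2r_0$, so the balls $B_{r_0}(y_0),\dots,B_{r_0}(y_{m_n})$ are pairwise disjoint and contained in $\Omega_n$. Hence $|\Omega_n|\geq(m_n+1)\,\omega_N r_0^N\to\infty$ as $\operatorname{diam}(\Omega_n)\to\infty$, which is the claim. The only slightly delicate point is the thickness property in the first step, where one must justify via the $C^{1,1}$-regularity that both the nearest-point segment and the interior-ball centre lie along the same normal; everything else uses only continuity of the path (so no rectifiability is needed) together with elementary projection and volume estimates.
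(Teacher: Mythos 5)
Your proof is correct and shares the paper's overall strategy: connectedness gives a continuous path between two far-apart points, from which one extracts many points pairwise separated on the scale of $r_0$; the uniform interior ball condition then places pairwise disjoint balls of fixed radius inside $\Omega_n$ around these points, forcing $|\Omega_n|\to\infty$. You differ in two local but worthwhile ways. For the extraction step, the paper runs an iterative ``first exit time'' construction along the path, using triangle inequalities to produce $n$ points at mutual distance $\geq 10r_0$ once $\operatorname{diam}(\Omega_n)>10^n r_0$; your projection-plus-intermediate-value-theorem argument is cleaner and yields a number of points growing linearly in the diameter rather than logarithmically, though only divergence of the count matters for the conclusion. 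For the thickness step, the paper simply cites \cite[Lemma 3.10]{GY12}, which places every $x\in\Omega_n$ inside some ball $B_{r_0/2}(y)\subset\Omega_n$, whereas you prove the radius-$r_0$ version from scratch; your tangency argument is sound --- the centre $y$ of the interior ball at the nearest boundary point $x_0$ satisfies $\operatorname{dist}(y,\partial\Omega)\geq |y-x_0|$, so $x_0$ is also a nearest boundary point to $y$, whence $y-x_0$ is normal at $x_0$ just as $x-x_0$ is --- and it makes your proof self-contained where the paper leans on the reference. Your constants ($4r_0$ separation of the $q_j$, radius-$r_0$ balls, centres more than $2r_0$ apart) all check out, so the disjointness and the final volume bound $|\Omega_n|\geq(m_n+1)\,\omega_N r_0^N$ are valid.
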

\begin{proof}[Proof of \Cref{3L1}]
	We assume without loss of generality that $\operatorname{diam}(\Omega_n)>10^nr_0$ for each $n$. We can then find for given $n\in \mathbb{N}$ elements $x,y\in \Omega_n$ with $|x-y|>10^nr_0$ by definition of the diameter. Since $\Omega_n$ is open and connected, it is in particular path connected. So we can connect $x$ and $y$ by a continuous curve $\gamma$, $\gamma(0)=x$, $\gamma(1)=y$. We set $x_1:=x$ and note that by continuity there exists a smallest time $t_2>0$ with $|\gamma(t_2)-x_1|=10r_0$. We set $x_2:=\gamma(t_2)$ and then consider the function $f_2(t):=|\gamma(t)-x_2|$ for $t_2\leq t\leq 1$. We note that $f_2(t_2)=0$ and by the triangle inequality we have $f_2(1)=|x_2-y|\geq |y-x_1|-|x_2-x_1|>(10^n-10)r_0$. Hence, for $n\geq 2$ we have $f_2(1)>20r_0$ and so we can find a smallest $t_2<t_3<1$ with $f_2(t_3)=20r_0$. We let $x_3:=\gamma(t_3)$ and observe that $|x_3-x_1|\geq |x_3-x_2|-|x_2-x_1|=10r_0$ and additionally $|x_3-y|\geq |y-x_2|-|x_2-x_3|>(10^n-30)r_0$ where we used the previously obtained estimate for $|y-x_2|$. In particular, for $n\geq 3$ we see that we found points $x_1,x_2,x_3$ with $|x_i-x_j|\geq 10r_0$ for all $i\neq j$. Thus, for given $n\in \mathbb{N}$ we can repeat the above procedure $n$-times to obtain points $x_1,\dots,x_n\in \Omega_n$ with $|x_i-x_j|\geq 10 r_0$ for all $i\neq j$. It then follows from \cite[Lemma 3.10]{GY12} that for each $x_i$ there exists some $y_i\in \Omega_n$ such that $x_i\in B_{\frac{r_0}{2}}(y_i)\subset \Omega_n$. Since the $x_i$ all have a distance of at least $10r_0$ it follows that the balls $B_{\frac{r_0}{2}}(y_i)$,$i=1,\dots,n$ are disjoint and since they are contained in $\Omega_n$ we thus find $|\Omega_n|\geq \omega_N\left(\frac{r_0}{2}\right)^Nn$ which tends to infinity as $n\rightarrow\infty$.
\end{proof}
We thus obtain the following useful concentration compactness type result for translation invariant objective functions
\begin{cor}
	\label{3C2}
	Let $r_0>0$ and $V\geq \omega_Nr^N_0$ be given. If $\mu:\operatorname{Sub}_c(\mathbb{R}^N)\rightarrow(0,\infty)$ is translation invariant and satisfies the approximate disjoint minimality and reverse monotonicity properties, i.e. satisfies properties (i),(iii) and (iv) of \Cref{2T6}, then there exists a minimising sequence $(\Omega_n)_n\subset S^V_{r_0}$ of $\mu$ and some $R>0$ such that $\Omega_n\subset B_R(0)$ for all $n\in \mathbb{N}$.
\end{cor}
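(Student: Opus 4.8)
The plan is to start from an arbitrary minimising sequence and to rearrange each of its members, component by component, into a single fixed ball by exploiting the translation invariance and disjoint minimality of $\mu$, the uniform diameter control being supplied by \Cref{3L1}.

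First I would fix an arbitrary minimising sequence $(\widetilde{\Omega}_n)_n\subset S^V_{r_0}$, so that $\mu(\widetilde{\Omega}_n)\to \inf_{S^V_{r_0}}\mu$, and decompose each $\widetilde{\Omega}_n$ into its (finitely many) connected components $C^n_1,\dots,C^n_{k_n}$. The two structural facts I need are a uniform bound on their number and a uniform bound on their diameter. Every component is a nonempty open set inheriting the uniform ball condition with radius $r_0$ from $\widetilde{\Omega}_n$ (the interior ball at a boundary point of a component, being connected and contained in $\widetilde{\Omega}_n$, lies in that component, while the exterior ball is a fortiori exterior to the component), so each component contains an interior ball of radius $r_0$ and hence has volume at least $\omega_N r^N_0$; since the total volume is $V$ this forces $k_n\leq M:=\lfloor V/(\omega_N r^N_0)\rfloor$ for all $n$. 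For the diameter I would argue by contradiction from \Cref{3L1}: the quantity $D:=\sup\{\operatorname{diam}(C)\mid C\in S_{r_0}\text{ connected},\ |C|\leq V\}$ is finite, because a sequence of connected sets in $S_{r_0}$ of volume at most $V$ whose diameters tend to infinity would, by \Cref{3L1}, have volumes tending to infinity. Hence each $C^n_j$ is connected, lies in $S_{r_0}$ and has $|C^n_j|\leq V$, so $\operatorname{diam}(C^n_j)\leq D$.

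Next I would rearrange. Each component sits inside a ball of radius $D$, so I translate it, $C^n_j\mapsto x^n_j+C^n_j$, placing the $k_n$ translated copies with centres along a line at mutual distance at least $2D+3r_0$; then all translated components lie in a common ball $B_R(0)$ with $R:=M(2D+3r_0)+D$, a radius depending only on $N$, $r_0$ and $V$ and hence uniform in $n$. Because distinct translated components are then separated by more than $2r_0$, and every exterior ball of radius $r_0$ touching the boundary of one component is contained in the ball of radius $2r_0$ about its touching point, the exterior ball condition (and trivially the interior one) survives for the disjoint union $\Omega_n:=\bigcup_{j=1}^{k_n}(x^n_j+C^n_j)$; its boundary is a disjoint union of the $C^{1,1}$ boundaries of the translates and is therefore again $C^{1,1}$, and its volume equals $\sum_j|C^n_j|=|\widetilde{\Omega}_n|=V$. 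Thus $\Omega_n\in S^V_{r_0}$ with $\Omega_n\subset B_R(0)$. Applying disjoint minimality and translation invariance then yields $\mu(\Omega_n)=\min_j\mu(x^n_j+C^n_j)=\min_j\mu(C^n_j)=\mu(\widetilde{\Omega}_n)$, where the last equality uses that the original components already have pairwise disjoint closures, a consequence of the uniform ball condition keeping distinct boundary parts uniformly apart. Consequently $(\Omega_n)_n$ is again a minimising sequence, now contained in the single ball $B_R(0)$.

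The routine parts are the per-component volume lower bound and the arithmetic of the rearrangement; the step deserving the most care — and where hypotheses (iii), (iv) and \Cref{3L1} are genuinely combined — is verifying that after translating the components apart the resulting set still belongs to $S^V_{r_0}$, in particular that the exterior ball condition is not destroyed by the surgery, while the value of $\mu$ is left unchanged. The one point I would argue explicitly rather than assume is that the connected components of a domain in $S_{r_0}$ have pairwise disjoint closures, so that disjoint minimality applies both to $\widetilde{\Omega}_n$ and to the rearranged $\Omega_n$; this separation is precisely the content of the uniform ball condition alluded to in the introduction.
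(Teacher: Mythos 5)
Your proposal is correct and follows essentially the same route as the paper's proof: bound the number of connected components by the per-component volume lower bound $\omega_N r_0^N$, bound their diameters via \Cref{3L1}, translate the components into a single fixed ball, and use translation invariance together with disjoint minimality to conclude the rearranged sequence is still minimising. The only difference is one of detail — you spell out the placement geometry, the survival of the uniform ball condition under the surgery, and the pairwise disjointness of component closures, all of which the paper leaves implicit in its phrase ``it is therefore possible to find some $R=R(m,d)>0$\dots'' — so this is a faithful, slightly more careful rendering of the same argument.
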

\begin{proof}[Proof of \Cref{3C2}]
	Since $|\Omega_n|=V$ for each $n$ we note that every connected component of each $\Omega_n$ has volume at most $V$. Consequently \Cref{3L1} tells us that there exists some $d>0$ such that the diameter of every connected component of every $\Omega_n$ is at most $d$. Further, we observe that the number of connected components of each $\Omega_n$ is uniformly bounded because by the interior ball property each connected component has volume of at least $\omega_Nr^N_0$. Hence there exists some $m\in \mathbb{N}$ such that $\#\Omega_n\leq m$ for all $n\in \mathbb{N}$.
	\newline
	Let us agree to make use of the convention that $\operatorname{dist}(A,\emptyset)=+\infty$ for any $A\subset \mathbb{R}^N$. We then can fix an arbitrary connected component $C_n$ of each $\Omega_n$. Then we consider $\delta_n:=\operatorname{dist}(C_n,\Omega_n\setminus C_n)$. If $\delta_n\rightarrow \infty$ we define $\Omega_n^1:=C_n$ and $\Omega_n^2:=\Omega_n\setminus C_n$. If, on the other hand, $\delta_n$ is uniformly bounded, we can fix one more additional connected component $\widetilde{C}_n$ of $\Omega_n\setminus C_n$ which realises the distance between $C_n$ and $\Omega_n\setminus C_n$. We then consider $\widehat{C}_n:=C_n\cup \widetilde{C}_n$ and $\widehat{\delta}_n:=\operatorname{dist}(\widehat{C}_n,\Omega_n\setminus \widehat{C}_n)$. Again, if $\widehat{\delta}_n\rightarrow \infty$, we define $\Omega_n^1:=\widehat{C}_n$ and $\Omega_n^2:=\Omega_n\setminus \Omega_n^1$, else we can extract one more connected component from $\Omega_n\setminus \widehat{C}_n$ which realises the distance between $\widehat{C}_n$ and $\Omega_n\setminus \widehat{C}_n$ and add it to $\widehat{C}_n$. Since the number of connected components of $\Omega_n$ is uniformly bounded and the diameters of each connected component are also uniformly bounded, we obtain after finitely many steps a decomposition $\Omega_n=\Omega_n^1\cup \Omega^2_n$ with $\operatorname{diam}(\Omega_n^1)\leq \delta$ for some $\delta>0$ (independent of $n$), $\overline{\Omega}_n^1\cap \overline{\Omega}_n^2=\emptyset$, $\# \Omega_n^2\leq \#\Omega_n -1$ and either $\Omega_n^2=\emptyset$ or else $\operatorname{dist}(\Omega_n^1,\Omega_n^2)\rightarrow\infty$ as $n\rightarrow\infty$. If we can extract a subsequence (denoted in the same way) with $\Omega_n^2=\emptyset$ for all $n$, then we already found a uniformly bounded minimising sequence and the claim follows from the translation invariance of $\mu$. Thus, we are left with considering the situation with $\Omega_n^2\neq \emptyset$ and $\operatorname{dist}(\Omega_n^1,\Omega_n^2)\rightarrow\infty$.
	\newline
	To simplify the notation we define $\lambda(\Omega):=\frac{1}{\mu(\Omega)}$ and $\lambda^+_n:=\max\{\lambda(\Omega_n^1),\lambda(\Omega_n^2)\}$. By assumption $\mu$ satisfies the approximate disjoint minimality property and hence there is a locally bounded function $f$ and a function $g$ with $g(s)\rightarrow 0$ as $s\rightarrow\infty$ such that
	\begin{gather}
		\nonumber
		|\lambda(\Omega_n)-\lambda_n^+|\leq f(\max\{|\Omega_n^1|,|\Omega_n^2|\})g(\operatorname{dist}(\Omega_n^1,\Omega_n^2)).
	\end{gather}
	We note that by the interior ball property we have $|B_{r_0}|\leq \max\{|\Omega_n^1|,|\Omega_n^2|\}\leq V$ and since $f$ is locally bounded, it is in particular bounded on the interval $[|B_{r_0}|,V]$ so that there is a constant $c>0$ (depending on $r_0$ and $V$) such that $|\lambda(\Omega_n)-\lambda^+_n|\leq cg(\operatorname{dist}(\Omega_n^1,\Omega_n^2))$. If $\lambda^+_n=\lambda(\Omega^1_n)$ for infinitely many $n$, we can consider the corresponding sequence $\lambda(\Omega_n^1)$ and observe that since $\operatorname{dist}(\Omega_n^1,\Omega_n^2)\rightarrow\infty$ and $g(s)\rightarrow 0$ as $s\rightarrow\infty$ we have $\lambda_n(\Omega_n^1)-\lambda(\Omega_n)\rightarrow 0$ and $\operatorname{diam}(\Omega^1_n)\leq \delta$ for all $n$. If, on the other hand, $\lambda^+_n=\lambda(\Omega_n^2)$ for all but finitely many $n$, then we can repeat the above procedure. More precisely, we can once more fix some connected component of each $\Omega_n^2$ and consider the distance to the complement within $\Omega_n^2$. If the distance is uniformly bounded, then we can remove a second connected component of each $\Omega_n^2$ and add it to the first so that after finitely many steps we similarly find a decomposition $\Omega^2_n=\Omega^3_n\cup \Omega^4_n$ with $\operatorname{diam}(\Omega^3_n)\leq \delta_2$ for some $\delta_2>0$ independent of $n$, $\overline{\Omega}^3_n\cap \overline{\Omega}^4_n=\emptyset$, $\# \Omega^4_n\leq \Omega^2_n-1\leq \#\Omega_n-2$ and either $\Omega^4_n=\emptyset$ or else $\operatorname{dist}(\Omega^3_n,\Omega^4_n)\rightarrow\infty$. Just like before we find $\lambda(\Omega^2_n)-\max\{\lambda(\Omega^3_n),\lambda(\Omega^4_n)\}\rightarrow 0$ and consequently $\lambda(\Omega_n)-\max\{\lambda(\Omega^3_n),\lambda(\Omega^4_n)\}\rightarrow 0$ as $n\rightarrow\infty$. If $\max\{\lambda(\Omega^3_n),\lambda(\Omega^4_n)\}=\lambda(\Omega^3_n)$ we found a sequence $\lambda(\Omega^3_n)-\lambda(\Omega_n)\rightarrow0$ and $\operatorname{diam}(\Omega^3_n)\leq \delta_2$ for all $n$. If instead again $\max\{\lambda(\Omega^3_n),\lambda(\Omega^4_n)\}=\lambda(\Omega^4_n)$ we can repeat the previous procedure once more with $\Omega^4_n$ in place of $\Omega^2_n$. The main observation now is that the number of connected components of our decomposition which is possibly unbounded reduces in each step by at least one. So since the number of connected components of each $\Omega_n$ is uniformly bounded above, we conclude that after finitely many steps we obtain a decomposition $\Omega_n=\widetilde{\Omega}_n\cup \widehat{\Omega}_n$ with $\overline{\widetilde{\Omega}}_n\cap \overline{\widehat{\Omega}}_n=\emptyset$, $\operatorname{diam}(\widetilde{\Omega}_n)\leq \delta$ for a suitable $\delta>0$ independent of $n$ and $\lambda(\Omega_n)-\lambda(\widetilde{\Omega}_n)\rightarrow0$ as $n\rightarrow\infty$. Due to the translation invariance we can translate the $\widetilde{\Omega}_n$ such that $0\in \widetilde{\Omega}_n$ for all $n$. Finally, if $\widehat{\Omega}_n=\emptyset$, then $|\widetilde{\Omega}_n|=V$. Otherwise $|\widetilde{\Omega}_n|\leq V-|B_{r_0}|$ by the interior ball property. We can now take an Euclidean ball of radius $r_n\geq r_0$ which has a distance of $2 r_0$ to $\widetilde{\Omega}_n$ and such that $|\widetilde{\Omega}_n\cup B_{r_n}|=V$. Then obviously $\widetilde{\Omega}_n\cup B_{r_n}\in S_{r_0}^V$. In addition, by the approximate disjoint minimality property we find $\lambda(\widetilde{\Omega}_n\cup B_{r_n})=\max\{\lambda(\widetilde{\Omega}_n),\lambda(B_{r_n})\}$ because $B_{r_n}$ is an Euclidean ball. We observe that if $\rho>0$ is chosen such that $|B_{\rho}|=V$ and if $\lim_{n\rightarrow\infty}\lambda(\Omega_n)\leq \lambda(B_\rho)$, then because $\lambda(\Omega)=\frac{1}{\mu(\Omega)}$, $B_{\rho}$ itself would be a global minimiser for $\mu$ and so a uniformly bounded minimising sequence exists. Thus we assume now that $\lambda(\Omega_n)>\lambda(B_{\rho})+\epsilon$ for all $n$ and some $\epsilon>0$. Then, since $\lambda(\widetilde{\Omega}_n)=\lambda(\Omega_n)+o(1)$ as $n\rightarrow \infty$, we find $\lambda(\widetilde{\Omega}_n)>\lambda(B_{\rho})$ for large enough $n$. Using the reverse monotonicity property, we find $\lambda(B_{r_n})\leq \lambda(B_\rho)$ because $r_n\leq \rho$ and consequently $\lambda(\widetilde{\Omega}_n\cup B_{r_n})=\lambda(\widetilde{\Omega}_n)=\lambda(\Omega_n)+o(1)$ as $n\rightarrow\infty$ and hence, since $r_n\leq \rho$, we found a new uniformly bounded minimising sequence $\Omega^\prime_n:=\widetilde{\Omega}_n\cup B_{r_n}$ of $\mu$ within $S^V_{r_0}$ (recall once more the relation $\lambda(\Omega)=\frac{1}{\mu(\Omega)}$ and that therefore minimising sequences of $\mu$ correspond to maximising sequences of $\lambda$).
	\end{proof}
\Cref{3C2} enables us to exploit the compactness result \Cref{2T4}.
\begin{proof}[Proof of \Cref{2T6}]
	By \Cref{3C2} we may consider a minimising sequence $(\Omega_n)_n\subset S^V_{r_0}$ such that for a suitable $R>0$ we have $\Omega_n\subset B_R(0)$ for all $n\in \mathbb{N}$. It then follows from \Cref{2T4} that, passing to a subsequence if necessary, the $\Omega_n$ converge to some $\Omega\in S_{r_0}$ with respect to the $\rho_{3R}$ metric. It is immediate from the definition of $\rho$ that since $\Omega_n\subset \overline{B_R(0)}$ for all $n$ we also have $\Omega\subset \overline{B_R(0)}$.
	
	The goal now is to show that $\Omega$ minimises $\mu$.
	
	Since $\Omega$ has a $C^{1,1}$-boundary, \Cref{2T2}, we may consider its outward pointing unit normal $\mathcal{N}$ which is of class $C^{0,1}$. We can then approximate $\mathcal{N}$ by means of the Stone-Weierstrass theorem in $C^0$-norm on $\partial\Omega$ by polynomials, i.e. $C^\infty$-smooth functions. We can then use a bump function to obtain a $C^\infty$-smooth vector field $X\in C^\infty_c(\mathbb{R}^N,\mathbb{R}^N)$ which is everywhere outward pointing along $\partial\Omega$ and which is compactly supported within $B_{2R}(0)$. Fix any such vector field $X$ and let $\psi_t$ denote its flow. We then define for given $m\in \mathbb{N}$ the open sets $V_m:=\psi_{\frac{1}{m}}(\Omega)$ and $U_m:=B_{(3-\frac{1}{m})R}\setminus \overline{V}_m$. We observe that $\overline{U}_m=\overline{B_{(3-\frac{1}{m}R)}(0)}\setminus V_m\subset B_{3R}(0)\setminus \overline{\Omega}$ for every $m$ because $X$ is outward pointing. It therefore follows from the exterior $\Gamma$-property, c.f. \cite[Theorem 2.10]{GY12}, that there exists some sequence $n(m)\in \mathbb{N}$ with $\overline{U}_m\subset B_{3R}(0)\setminus \overline{\Omega}_{n(m)}$ and such that $n(m)\rightarrow\infty$ as $m\rightarrow\infty$. We recall that $\overline{U}_m=\overline{B_{(3-\frac{1}{m})R}(0)}\setminus V_m$ and that $\Omega_n,V_m\subset B_{2R}(0)$ for all $m,n$ (because $X$ is compactly supported within $B_{2R}$) so that
	\[
	\Omega_{n(m)}\subset \overline{\Omega}_{n(m)}=B_{2R}\setminus \left(B_{3R}(0)\setminus \overline{\Omega}_{n(m)}\right)\subset B_{2R}(0)\setminus \left(\overline{B_{(3-\frac{1}{m}R)}(0)}\setminus V_m\right)=V_m=\psi_{\frac{1}{m}}(\Omega).
	\]
	Now the reverse monotonicity principle property of $\mu$ implies
	\[
	\mu\left(\psi_{\frac{1}{m}}(\Omega)\right)\leq \mu(\Omega_{n(m)})\text{ for all }m\in \mathbb{N}.
	\]
	Since $n(m)\rightarrow\infty$ as $m\rightarrow\infty$ and $\Omega_n$ was a minimising sequence, we find
	\[
	\lim_{m\rightarrow\infty}\mu(\Omega_{n(m)})=\inf_{\widetilde{\Omega}\in S^V_{r_0}}\mu(\widetilde{\Omega})
	\]
	while the left hand side of the inequality converges by the outward flow property of $\mu$ to $\mu(\Omega)$. Consequently we find
	\[
	\mu(\Omega)\leq \inf_{\widetilde{\Omega}\in S^V_{r_0}}\mu(\widetilde{\Omega}).
	\]
	We are left with showing that $|\Omega|=V$ because we already know that $\Omega\in S_{r_0}$. First we note that the inclusion $\Omega_{n(m)}\subset \psi_{\frac{1}{m}}(\Omega)$ implies $|\psi_{\frac{1}{m}}(\Omega)|\geq |\Omega_{n(m)}|=V$ for all $m\in \mathbb{N}$ and that a simple change of variables shows that $|\psi_{\frac{1}{m}}(\Omega)|$ converges to $|\Omega|$ so that $|\Omega|\geq V$. The converse inequality is a well-known fact, see \cite[Chapter 6.4 Corollary 1]{DZ11}, which tells us that $|\Omega|\leq \liminf_{n\rightarrow\infty}|\Omega_n|=V$ and consequently $\Omega\in S^V_{r_0}$ as desired.
\end{proof}
\section{Proof of \Cref{2C7}}
In this section we show that the functions $\nu$ and $\eta$ defined in (\ref{2E3}) and (\ref{2E4}) satisfy the conditions of \Cref{2T6}. We recall that according to \cite[Theorem B]{CDG00} there is an absolute constant $c>0$ such that $\frac{c}{\sqrt[3]{|\Omega|}}\leq \min\left\{\nu(\Omega),\eta(\Omega)\right\}$ for all $\Omega\in \operatorname{Sub}_c(\mathbb{R}^3)$ which shows that $\nu$ and $\eta$ map into $(0,\infty)$ and which makes the question of existence of optimal domains in fixed volume classes particularly intriguing.
\newline
\newline
\begin{proof}[Proof of \Cref{2C7}] 
	$\quad$
	\newline
	\newline
	\underline{Property (i):} We want to show that both $\nu$ and $\eta$ have the reverse monotonicity property. So let $\Omega_1\subset \Omega_2$ be bounded, open $C^{1,1}$ sets. Given $B\in \mathcal{V}^{\operatorname{T}}_{\operatorname{div}=0}(\Omega_1)$ (resp. $\mathcal{V}^{\operatorname{T},\operatorname{ZF}}_{\operatorname{div}=0}(\Omega_1)$) we can define $\widetilde{B}:=\chi_{\Omega_1}B\in L^2(\Omega_2,\mathbb{R}^3)$. It follows now straightforward from definitions (\ref{2E1}), (\ref{2E2}) of the spaces $\mathcal{V}^{\operatorname{T}}_{\operatorname{div}=0}(\Omega)$ (resp. $\mathcal{V}^{\operatorname{T},\operatorname{ZF}}_{\operatorname{div}=0}(\Omega)$) that $\widetilde{B}\in \mathcal{V}^{\operatorname{T}}_{\operatorname{div}=0}(\Omega)$ (resp. $\widetilde{B}\in \mathcal{V}^{\operatorname{T},\operatorname{ZF}}_{\operatorname{div}=0}(\Omega)$) and that $\mathcal{M}_{\Omega_2}(\widetilde{B})=\mathcal{M}_{\Omega_1}(B)$, $\mathcal{H}_{\Omega_2}(\widetilde{B})=\mathcal{H}_{\Omega_1}(B)$ (recall that $\mathcal{M}$ denotes the magnetic energy, i.e. the $L^2$-norm squared, and we use a subscript to specify the domain of integration and that $\mathcal{H}$ denotes the helicity (\ref{E1})). Consequently
	\[
	\nu(\Omega_1)=\inf_{\substack{B\in \mathcal{V}^{\operatorname{T}}_{\operatorname{div}=0}(\Omega_1)\\ \mathcal{H}_{\Omega_1}(B)>0}}\frac{\mathcal{M}_{\Omega_1}(B)}{\mathcal{H}_{\Omega_1}(B)}=\inf_{\substack{\widetilde{B}\in \mathcal{V}^{\operatorname{T}}_{\operatorname{div}=0}(\Omega_2)\\ \mathcal{H}_{\Omega_2}(\widetilde{B})>0\\ \widetilde{B}=0\text{ on }\Omega_2\setminus \Omega_1 \\ \widetilde{B}|_{\Omega_1}\in \mathcal{V}^{\operatorname{T}}_{\operatorname{div}=0}(\Omega_1)}}\frac{\mathcal{M}_{\Omega_2}(\widetilde{B})}{\mathcal{H}_{\Omega_2}(\widetilde{B})}\geq \nu(\Omega_2)\text{ (resp. }\eta(\Omega_1)\geq \eta(\Omega_2)\text{)}.
	\]
	\underline{Property (ii):} Let $\Omega\in \operatorname{Sub}_c(\mathbb{R}^3)$ and $X\in C^\infty_c(\mathbb{R}^3,\mathbb{R}^3)$ be such that $X$ is everywhere outward pointing along $\partial\Omega$. Let $\psi_t$ denote the (global) flow of $X$. We have to prove that $\lim_{t\searrow 0}\nu(\psi_t(\Omega))=\nu(\Omega)$ (resp. $\lim_{t\searrow 0}\eta(\psi_t(\Omega))=\eta(\Omega)$). In order to derive this result we set for notational simplicity $\Omega_t:=\psi_t(\Omega)$ and given $B\in \mathcal{V}^{\operatorname{T}}_{\operatorname{div}=0}(\Omega)$ we define a vector field on $\Omega_t$ by, see also \cite[Proof of Lemma 4.4]{EGPS23},
	\[
	B_t(x):=\frac{((\psi_t)_{*}B)(x)}{\det(D\psi_t)(\psi^{-1}_t(x))}\text{, }((\psi_t)_{*}B)(x):=(D\psi_t)\left(\psi^{-1}_t(x)\right)\cdot B\left(\psi^{-1}_t(x)\right).
	\]
	If $X\in L^2(\Omega_t,\mathbb{R}^3)$ is any other arbitrary vector field, applying the change of variables formula yields
	\begin{equation}
		\label{4E1}
		\int_{\Omega_t}B_t(x)\cdot X(x)d^3x=\int_{\Omega}B(x)\cdot \left((D\psi_t)^{\operatorname{Tr}}(x)\cdot X(\psi_t(x))\right)d^3x.
	\end{equation}
	If $X(x)=\nabla f(x)$ for some $f\in H^1(\Omega_t)$, a direct calculation yields $\nabla (f\circ \psi_t)(x)=(D\psi_t)^{\operatorname{Tr}}(x)\cdot (\nabla f)(\psi_t(x))$ and hence (\ref{4E1}) becomes
	\[
	\int_{\Omega_t}B_t(x)\cdot \nabla f(x)d^3x=\int_{\Omega}B(x)\cdot \nabla (f\circ \psi_t)(x)d^3x=0
	\]
	where we used in the last step that $f\circ \psi_t\in H^1(\Omega)$ for every $f\in H^1(\Omega_t)$ and that $B\in \mathcal{V}^{\operatorname{T}}_{\operatorname{div}=0}(\Omega)$. Hence $B_t\in \mathcal{V}^{\operatorname{T}}_{\operatorname{div}=0}(\Omega_t)$. Similarly, if $B\in \mathcal{V}^{\operatorname{T},\operatorname{ZF}}_{\operatorname{div}=0}(\Omega)$, then we observe that if $X$ is curl-free in the weak sense on $\Omega_t$, then $(D\psi_t)^{\operatorname{Tr}}(x)\cdot X(\psi_t(x))$ is curl-free in the weak sense on $\Omega$ (this can be most easily seen by identifying the vector field $X$ with a $1$-form $\omega$ and noting that $(D\psi_t)^{\operatorname{Tr}}(x)\cdot X(\psi_t(x))$ then corresponds to the $1$-form $\psi^\#_t\omega$, keeping in mind that being curl-free corresponds to closedness of the corresponding $1$-form and that pullbacks commute with the exterior derivative). Hence, if $X\in L^2(\Omega_t,\mathbb{R}^3)$ is any curl-free field, then $Y:=(D\psi_t)^{\operatorname{Tr}}(x)\cdot X(\psi_t(x))$ will be also curl-free and we can perform an $L^2$-orthogonal decomposition of $Y$ into $Y=\nabla f+\Gamma$ for a suitable $f\in H^1_0(\Omega)$ and square integrable $\Gamma$ which is div- and curl-free in the weak sense. Then the defining properties of $\mathcal{V}^{\operatorname{T},\operatorname{ZF}}_{\operatorname{div}=0}(\Omega)$ imply that the corresponding integral vanishes and hence $B_t\in \mathcal{V}^{\operatorname{T},\operatorname{ZF}}_{\operatorname{div}=0}(\Omega_t)$. In fact, because $\psi_t$ is a diffeomorphism, the map $B \mapsto B_t$ defines an isomorphism between $\mathcal{V}^{\operatorname{T}}_{\operatorname{div}=0}(\Omega)$ and $\mathcal{V}^{\operatorname{T}}_{\operatorname{div}=0}(\Omega_t)$ (resp. $\mathcal{V}^{\operatorname{T},\operatorname{ZF}}_{\operatorname{div}=0}(\Omega)$ and $\mathcal{V}^{\operatorname{T},\operatorname{ZF}}_{\operatorname{div}=0}(\Omega_t)$). We will first argue that $\mathcal{H}_{\Omega_t}(B_t)=\mathcal{H}_\Omega(B)$ for all $B\in \mathcal{V}^{\operatorname{T}}_{\operatorname{div}=0}(\Omega)$. To see this we observe that if we extend a given $B\in \mathcal{V}^{\operatorname{T}}_{\operatorname{div}=0}(\Omega)$ by zero outside of $\Omega$, we obtain a new vector field $\widetilde{B}\in L^p(\mathbb{R}^3,\mathbb{R}^3)$ for all $1\leq p\leq 2$ and that $\widetilde{B}$ is divergence-free in the weak sense on $\mathbb{R}^3$. It then follows from the Hardy-Littlewood-Sobolev inequality that $\operatorname{BS}_{\mathbb{R}^3}(\widetilde{B})=\operatorname{BS}_{\Omega}(B)\in L^{3}(\mathbb{R}^3,\mathbb{R}^3)$ and it is standard, c.f. \cite[Proposition 1]{CDG01}, that it satisfies $\operatorname{curl}(\operatorname{BS}(\widetilde{B}))=\widetilde{B}\in L^{\frac{3}{2}}(\mathbb{R}^3,\mathbb{R}^3)$ (and $\operatorname{div}(\operatorname{BS}(\widetilde{B}))=0$). To simplify notation we set $Z:=\operatorname{BS}_{\mathbb{R}^3}(\widetilde{B})$ and
	\[
	Z_t(x):=Z^j(\psi^{-1}_t(x))(\partial_i\psi^j_{-t}(x))e_i
	\]
	on $\mathbb{R}^3$ and observe that
	\[
	\operatorname{curl}(Z_t)(x)=\frac{((\psi_t)_{*}\widetilde{B})(x)}{\det(D\psi_t)(\psi^{-1}_t(x))}=:\widetilde{B}_t\text{ on }\mathbb{R}^3.
	\]
	It is clear that since $Z\in L^{3}(\mathbb{R}^3,\mathbb{R}^3)$, that so is $Z_t$ for any $t$. It then follows from \Cref{AC2} that we may use $Z_t$ as a vector potential in order to compute the helicity of $\widetilde{B}_t$. We note that $\widetilde{B}_t=\chi_{\Omega_t}B_t$ and hence
	\[
	\mathcal{H}_{\Omega_t}(B_t)=\mathcal{H}_{\mathbb{R}^3}(\widetilde{B}_t)=\int_{\mathbb{R}^3}\widetilde{B}_t\cdot Z_td^3x=\int_{\Omega_t}B_t(x)\cdot Z_t(x)d^3x.
	\]
Using (\ref{4E1}) with $X=Z_t$ together with the fact that $(D\psi^{-1}_t)(\psi_t(x))D\psi_t(x)=D(\psi_t^{-1}\circ \psi_t)(x)=\operatorname{Id}$ we obtain
\[
\mathcal{H}_{\Omega_t}(B_t)=\int_{\Omega}B(x)\cdot \operatorname{BS}_{\mathbb{R}^3}(\widetilde{B})(x)d^3x=\int_{\Omega}B(x)\cdot \operatorname{BS}_{\Omega}(B)(x)d^3x=\mathcal{H}_{\Omega}(B)
\]
and so indeed helicity is preserved by this isomorphism, see also \cite[Theorem A]{CDGT002} for the case of volume preserving transformations. We have shown so far that the map $B\mapsto B_t$ defines a helicity preserving isomorphism between $\mathcal{V}^{\operatorname{T}}_{\operatorname{div}=0}(\Omega)$ and $\mathcal{V}^{\operatorname{T}}_{\operatorname{div}=0}(\Omega_t)$ (resp. $\mathcal{V}^{\operatorname{T},\operatorname{ZF}}_{\operatorname{div}=0}(\Omega)$ and $\mathcal{V}^{\operatorname{T},\operatorname{ZF}}_{\operatorname{div}=0}(\Omega_t)$). Finally, using once more (\ref{4E1}) we find
\[
\mathcal{M}_{\Omega_t}(B_t)=\int_{\Omega}\frac{|D\psi_t(x)\cdot B(x)|^2}{\det(D\psi_t(x))}d^3x.
\]
It is now easy to see, by means of a Taylor expansion in time, that there is a constant $C>0$ (independent of $B$) such that for all $0\leq t\leq 1$
\[
\mathcal{M}_{\Omega_t}(B_t)\geq \mathcal{M}_{\Omega}(B)\left(1-Ct\right).
\]
It now follows immediately from the definition of $\nu$ (resp. $\eta$) and the above considerations that $\nu(\Omega_t)\geq \nu(\Omega)(1-Ct)$ (resp. $\eta(\Omega_t)\geq \eta(\Omega)(1-Ct)$). On the other hand, since $X$ is outward pointing we have $\Omega\subset \Omega_t$ and thus, by the monotonicity property $\nu(\Omega_t)\leq \nu(\Omega)$ (resp. $\eta(\Omega_t)\leq \eta(\Omega)$) and the claim follows from the sandwich lemma.
\newline
\newline
\underline{Property (iii):} We want to show that there is a locally bounded function $f:\mathbb{R}_{>0}\rightarrow \mathbb{R}_{\geq 0}$ and a function $g:\mathbb{R}_{>0}\rightarrow\mathbb{R}_{\geq 0}$ with $g(s)\rightarrow 0$ as $s\rightarrow\infty$ such that for all $\Omega_1,\Omega_2\in \operatorname{Sub}_c(\mathbb{R}^3)$ with $\overline{\Omega}_1\cap \overline{\Omega}_2=\emptyset$ we have $\left|\frac{1}{\nu(\Omega_1\cup \Omega_2)}-\frac{1}{\min\{\nu(\Omega_1),\nu(\Omega_2)\}}\right|\leq f(\max\{|\Omega_1|,|\Omega_2|\})g(\operatorname{dist}(\Omega_1,\Omega_2))$ (resp. $\left|\frac{1}{\eta(\Omega_1\cup \Omega_2)}-\frac{1}{\min\{\eta(\Omega_1),\eta(\Omega_2)\}}\right|\leq f(\max\{|\Omega_1|,|\Omega_2|\})g(\operatorname{dist}(\Omega_1,\Omega_2))\}$). As already pointed out in the introduction the quantity $\eta(\Omega)$ corresponds to the first curl eigenvalue $\mu_1(\Omega)$ of $\Omega$ and the infimum in the definition of $\eta(\Omega)$ is precisely achieved by the corresponding curl eigenfields \cite[Theorem 2.1]{G20}. Therefore, since the positive spectrum of the curl operator on $\Omega_1\cup \Omega_2$ is simply the union of the corresponding positive spectra of $\Omega_1$ and $\Omega_2$, the smallest positive eigenvalue of $\Omega_1\cup \Omega_2$ is the minimum of the smallest positive eigenvalues of $\Omega_1$ and $\Omega_2$ which yields the claim in this case because we can in fact choose $f=0=g$.

While we can argue similarly for $\nu$ that the infimum in the definition of $\nu$ (\ref{2E3}) is achieved by the eigenfields of a modified Biot-Savart operator \cite[Theorem D \& subsequent comments]{CDG01},\cite[Chapter I Introduction]{CDGT002} and that in this case the eigenfields correspond to the largest positive eigenvalue $\sigma_+>0$ of the compact modified Biot-Savart operator and the quantity $\nu$ is a variational characterisation of $\frac{1}{\sigma_+}$, we cannot argue as in the case of $\eta$ that the spectrum of the union is the union of the spectra of the subdomains because the Biot-Savart operator is a non-local operator. We will now show that $\nu$ nonetheless satisfies the approximate disjoint minimality property. To see this we first fix any two domains $\Omega_1,\Omega_2\in \operatorname{Sub}_c(\mathbb{R}^3)$ of positive distance and define for notational simplicity $\lambda(\Omega):=\frac{1}{\mu(\Omega)}$. We observe that $\lambda(\Omega)=\sup_{B\in \mathcal{V}^{\operatorname{T}}_{\operatorname{div}=0}(\Omega)\setminus\{0\}}\frac{\mathcal{H}(B)}{\mathcal{M}(B)}$, recall \Cref{2D5} and Equation (\ref{E3}). Now, given some $B\in \mathcal{V}^{\operatorname{T}}_{\operatorname{div}=0}(\Omega_1\cup\Omega_2)$ we set $B_1:=B|_{\Omega_1}$ and $B_2:=B|_{\Omega_2}$ and note that $B_i\in \mathcal{V}^{\operatorname{T}}_{\operatorname{div}=0}(\Omega_i)$ for $i=1,2$ which follows easily from the definition. Further, we use a subscript to indicate the domain of integration, so for example $\operatorname{BS}_{\Omega_2}(B_2)(x)=\frac{1}{4\pi}\int_{\Omega_2}B_2(y)\times \frac{x-y}{|x-y|^3}d^3y$ etc.. We then have the following identity, with $\Omega:=\Omega_1\cup \Omega_2$,
\begin{gather}
	\label{4E2}
	\frac{\mathcal{H}_{\Omega}(B)}{\mathcal{M}_{\Omega}(B)}=\frac{\mathcal{H}_{\Omega_1}(B_1)+\mathcal{H}_{\Omega_2}(B_2)}{\mathcal{M}_{\Omega_1}(B_1)+\mathcal{M}_{\Omega_2}(B_2)}+\frac{\int_{\Omega_1}B_1(x)\cdot \operatorname{BS}_{\Omega_2}(B_2)(x)d^3x+\int_{\Omega_2}B_2(x)\cdot \operatorname{BS}_{\Omega_1}(B_1)(x)d^3x}{\mathcal{M}_{\Omega_1}(B_1)+\mathcal{M}_{\Omega_2}(B_2)}.
\end{gather}
We observe that the Biot-Savart operators $\operatorname{BS}_{\Omega_i}(B_i)$ are defined on all of $\mathbb{R}^3$ and so for instance the term involving $\operatorname{BS}_{\Omega_1}(B_1)(x)$ for $x\in \Omega_2$ does not vanish in general. We can now however use the definition of the Biot-Savart operator and the fact that $\delta:=\operatorname{dist}(\Omega_1,\Omega_2)>0$ to estimate
\begin{gather}
	\nonumber
	|\operatorname{BS}_{\Omega_1}(B_1)(x)|\leq \frac{1}{4\pi}\int_{\Omega_1}\frac{|B_1(y)|}{|x-y|^2}d^3y\leq \frac{1}{4\pi\delta^2}\int_{\Omega_1}|B_1(y)|d^3y\text{ for all }x\in \Omega_2.
\end{gather}
Consequently we can estimate
\begin{gather}
	\nonumber
	\left|\int_{\Omega_2}B_2(x)\cdot \operatorname{BS}_{\Omega_1}(B_1)(x)d^3x\right|\leq \frac{1}{4\pi\delta^2}\int_{\Omega_2}\int_{\Omega_1}|B_2(x)||B_1(y)|d^3yd^3x\\
	\nonumber
	\leq \frac{1}{2}\left(\frac{\mathcal{M}_{\Omega_1}(B_1)+\mathcal{M}_{\Omega_2}(B_2)}{4\pi\delta^2}\max\{|\Omega_1|,|\Omega_2|\}\right),
\end{gather}
where we used the elementary inequality $ab\leq \frac{a^2+b^2}{2}$ with $a=|B_1(y)|$,$b=|B_2(x)|$. Letting $R(B):=\frac{\int_{\Omega_1}B_1(x)\cdot \operatorname{BS}_{\Omega_2}(B_2)(x)d^3x+\int_{\Omega_2}B_2(x)\cdot \operatorname{BS}_{\Omega_1}(B_1)(x)d^3x}{\mathcal{M}_{\Omega_1}(B_1)+\mathcal{M}_{\Omega_2}(B_2)}$ we observe that this implies
\begin{gather}
	\label{4E3}
	|R(B)|\leq \frac{\max\{|\Omega_1|,\Omega_2|\}}{4\pi \delta^2}\text{, }\delta=\operatorname{dist}(\Omega_1,\Omega_2).
\end{gather}
It is now standard to verify that $\sup_{B\in \mathcal{V}^{\operatorname{T}}_{\operatorname{div}=0}(\Omega)\setminus\{0\}}\frac{\mathcal{H}_{\Omega_1}(B_1)+\mathcal{H}_{\Omega_2}(B_2)}{\mathcal{M}_{\Omega_1}(B_1)+\mathcal{M}_{\Omega_2}(B_2)}=\max\{\lambda(\Omega_1),\lambda(\Omega_2)\}=:\lambda_+$ and that the supremum is in fact achieved by a maximiser of either $\lambda(\Omega_1)$ or $\lambda(\Omega_2)$ which is set to zero on the respective remaining part of $\Omega$. We can therefore let $B\in \mathcal{V}^{\operatorname{T}}_{\operatorname{div}=0}(\Omega)$ be a maximiser of $\lambda(\Omega)$ and conclude from (\ref{4E2}) and (\ref{4E3})
\begin{gather}
	\nonumber
	\lambda(\Omega)=\frac{\mathcal{H}_{\Omega_1}(B_1)+\mathcal{H}_{\Omega_2}(B_2)}{\mathcal{M}_{\Omega_1}(B_1)+\mathcal{M}_{\Omega_2}(B_2)}+R(B)\leq \lambda_++\frac{\max\{|\Omega_1|,|\Omega_2|\}}{4\pi\delta^2}\Rightarrow \lambda(\Omega)-\lambda_+\leq \frac{\max\{|\Omega_1|,|\Omega_2|\}}{4\pi\delta^2}.
\end{gather}
Conversely, if we let $\widetilde{B}\in \mathcal{V}^{\operatorname{T}}_{\operatorname{div}=0}(\Omega)$ be a maximiser of $\frac{\mathcal{H}_{\Omega_1}(B_1)+\mathcal{H}_{\Omega_2}(B_2)}{\mathcal{M}_{\Omega_1}(B_1)+\mathcal{M}_{\Omega_2}(B_2)}$, then we find once more by means of (\ref{4E2}) and (\ref{4E3})
\begin{gather}
	\nonumber
	\lambda_+=\frac{\mathcal{H}_{\Omega}(\widetilde{B})}{\mathcal{M}_{\Omega}(\widetilde{B})}-R(\widetilde{B})\leq \lambda(\Omega)+\frac{\max\{|\Omega_1|,|\Omega_2|\}}{4\pi\delta^2}\Rightarrow \lambda_+-\lambda(\Omega)\leq \frac{\max\{|\Omega_1|,|\Omega_2|\}}{4\pi\delta^2}.
\end{gather}
Overall we conclude that $|\lambda(\Omega_1\cup\Omega_2)-\max\{\lambda(\Omega_1),\lambda(\Omega_2)\}|\leq \frac{\max\{|\Omega_1|,|\Omega_2|\}}{4\pi(\operatorname{dist}(\Omega_1,\Omega_2))^2}$ so that we may let $f(s):=\frac{s}{4\pi}$ and $g(s):=\frac{1}{s^2}$ which proves the desired inequality for $\nu$ upon recalling that $\lambda(\Omega)=\frac{1}{\nu(\Omega)}$.

We are left with proving that $\lambda(\Omega_1\cup B_r)=\max\{\lambda(\Omega_1),\lambda(B_r)\}$ for every $\Omega_1\in \operatorname{Sub}_c(\mathbb{R}^3)$ and any Euclidean ball $B_r$ with positive distance to $\Omega_1$. This will however follow from our previous arguments once we show that $R(B)=0$ for all $B\in \mathcal{V}^{\operatorname{T}}_{\operatorname{div}=0}(\Omega_1\cup B_r)$ since our previous arguments in that case show that $\lambda(\Omega)-\lambda_+\leq 0$ and $\lambda_+-\lambda(\Omega)\leq 0$ where $\Omega:=\Omega_1\cup B_r$ and $\lambda_+:=\max\{\lambda(\Omega_1),\lambda(B_r)\}$. With our previous notation, where we let $\Omega_2:=B_r$, we observe that
\begin{gather}
	\nonumber
	\int_{\Omega_1}B_1(x)\cdot \operatorname{BS}_{\Omega_2}(B_2)(x)d^3x=\frac{1}{4\pi}\int_{\Omega_1}\int_{\Omega_2}B_1(x)\cdot \left(B_2(y)\times \frac{x-y}{|x-y|^3}\right)d^3yd^3x
	\\
	\nonumber
	=\int_{\Omega_2}B_2(y)\cdot \operatorname{BS}_{\Omega_1}(B_1)(y)d^3y
\end{gather}
where we used the cyclic property of the scalar product. It is therefore enough to show that $\int_{B_r}B_2(y)\cdot \operatorname{BS}_{\Omega_1}(B_1)(y)d^3y=0$ for all $B_i\in \mathcal{V}^{\operatorname{T}}_{\operatorname{div}=0}(\Omega_i)$, $i=1,2$. Because $B_r$ is a ball and each $B_2\in \mathcal{V}^{\operatorname{T}}_{\operatorname{div}=0}(B_r)$ is div-free and tangent to the boundary, it follows from the Hodge-decomposition theorem, \cite[Corollary 3.5.2 \& Theorem 2.6.1]{S95} that there exists an $H^1$-vector field $A$ on $B_r$ which is normal to the boundary with $B_2=\operatorname{curl}(A)$. Consequently
\begin{gather}
	\nonumber
	\int_{\Omega_2}B_2(y)\cdot \operatorname{BS}_{\Omega_1}(B_1)(y)d^3y=\int_{B_r}\operatorname{curl}(A)(y)\cdot \operatorname{BS}_{\Omega_1}(B_1)(y)d^3y=\int_{B_r}A(y)\cdot \operatorname{curl}(\operatorname{BS}_{\Omega_1}(B_1)(y))d^3y=0,
\end{gather}
where we used the fact that $A$ is normal to the boundary and thus no boundary terms appear when integrating by parts and the fact that $\operatorname{curl}(\operatorname{BS}_{\Omega_1}(B_1)(y))=0$ for all $y\in B_r$ because $B_r$ and $\Omega_1$ are disjoint, see \cite[Proposition 1]{CDG01}. As explained, this shows that $\lambda(\Omega_1\cup B_r)=\max\{\lambda(\Omega_1),\lambda(B_r)\}$ for every Euclidean ball $B_r$ which has positive distance to $\Omega_1$. Recalling the relation $\lambda(\Omega)=\frac{1}{\nu(\Omega)}$ we conclude that $\nu$ satisfies the approximate disjoint minimality property.
\newline
\newline
\underline{Property (iv):} The remaining argument is straightforward. We can similarly as in part (ii), by replacing $\psi_t$ by the diffeomorphism $\psi$ which induces a translation by $x$, define an isomorphism between $\mathcal{V}^{\operatorname{T}}_{\operatorname{div}=0}(\Omega)$ and $\mathcal{V}^{\operatorname{T}}_{\operatorname{div}=0}(\psi(\Omega))$ (resp. $\mathcal{V}^{\operatorname{T},\operatorname{ZF}}_{\operatorname{div}=0}(\Omega)$ and $\mathcal{V}^{\operatorname{T},\operatorname{ZF}}_{\operatorname{div}=0}(\psi(\Omega))$). As we have seen this isomorphism preserves helicity and because $\psi$ induces an isometry, more precisely we even have $D\psi(y)=\operatorname{Id}$ for all $y\in \Omega$, one easily infers that the defined isomorphism also preserves the magnetic energy $\mathcal{M}$. From this and the definition of $\nu$ (resp. $\eta$) it immediately follows that we have for all $x\in \mathbb{R}^3$ and all $\Omega\in \operatorname{Sub}_c(\mathbb{R}^3)$ $\nu(x+\Omega)=\nu(\Omega)$ (resp. $(\eta(x+\Omega)=\eta(\Omega))$).
\newline
\newline
We conclude that $\nu$ as well as $\eta$ satisfy properties (i)-(iv) of \Cref{2T6} and thus \Cref{2C7} follows.
\end{proof}
\section*{Acknowledgements}
This work has been supported by the Inria AEX StellaCage. Further, the author would like to thank an anonymous referee for pointing out the non-locality of the Biot-Savart operator which had not been taken into account correctly in the first version of this manuscript.
\section*{Data availability}
Data sharing is not applicable to this article as no new data were created or analysed in this study.
\section*{Conflict of interest}
The author has no conflict of interest to declare.
\appendix
\renewcommand{\thesection}{A}
\section{Gauge invariance of helicity}
We first prove the following approximation lemma, c.f. \cite[Lemma 5.3.11]{G20Diss}
\begin{lem}[Approximation lemma]
	\label{AL1}
	Let $1<p<3$ and let $q$ be given by the equation $\frac{1}{q}=\frac{1}{p}-\frac{1}{3}$. If $A\in L^q(\mathbb{R}^3,\mathbb{R}^3)$ with $\operatorname{curl}(A)\in L^p(\mathbb{R}^3,\mathbb{R}^3)$, then there exists a sequence $(A_n)_n\subset C^{\infty}_c(\mathbb{R}^3,\mathbb{R}^3)$ such that
	\begin{enumerate}
		\item $A_n\rightarrow A$ in $L^q(\mathbb{R}^3,\mathbb{R}^3)$,
		\item $\operatorname{curl}(A_n)\rightarrow \operatorname{curl}(A)$ in $L^p(\mathbb{R}^3,\mathbb{R}^3)$.
	\end{enumerate}
\end{lem}
\begin{proof}[Proof of \Cref{AL1}]
	First, we can fix a sequence of bump functions $(\rho_n)_n\subset C^\infty_c(\mathbb{R}^3)$ with $0\leq \rho_n\leq 1$, $\rho_n(x)=1$ for all $|x|\leq n$, $\rho_n(x)=0$ for all $|x|\geq 3n$ and $|\operatorname{grad}(\rho_n)(x)|\leq\frac{C}{n}$ for a constant $C>0$ independent of $n$. In addition, let $(\phi_n)_n$ be a standard Dirac sequence. We then define $A_n(x):=(\phi_n\star (\rho_n A))(x)$, where $\star$ denotes the convolution operator. It is standard that $A_n\rightarrow A$ in $L^q(\mathbb{R}^3,\mathbb{R}^3)$. We further compute by properties of the convolution operator (we make use of the Einstein summation convention)
	\begin{gather}
		\nonumber
		\operatorname{curl}(A_n)(x)=\epsilon^{ijk}e_k\int_{\mathbb{R}^3}(\partial_i\phi_n)(x-y)\rho_n(y)A_j(y)d^3y
	\end{gather}
				\begin{AppA}
		\label{AE1}
		=\epsilon^{ijk}e_k\int_{\mathbb{R}^3}(\partial_i\rho_n)(y)\phi_n(x-y)A_j(y)d^3y-\epsilon^{ijk}e_k\int_{\mathbb{R}^3}\partial_i(\phi_n(x-y)\rho_n(y))A_j(y)d^3y.
	\end{AppA}
	We note first that by the convolution inequality and normalisation of the Dirac sequence
	\begin{gather}
		\nonumber
		\|\phi_n\star ((\partial_i\rho_n) A_j)\|_{L^p(\mathbb{R}^3)}\leq \|\phi_n\|_{L^1(\mathbb{R}^3)}\|(\partial_i\rho_n)A_j\|_{L^p(\mathbb{R}^3)}=\|(\partial_i\rho_n)A_j\|_{L^p(\mathbb{R}^3)}.
	\end{gather}
	In addition, $\partial_i\rho_n(x)=0$ for $|x|\leq n$ because $\rho_n$ is constant on this set. We recall that $\frac{1}{q}+\frac{1}{3}=\frac{1}{p}$ and so the generalised H\"{o}lder inequality implies
	\begin{gather}
		\nonumber
		\|(\partial_i\rho_n)A_j\|_{L^p(\mathbb{R}^3)}=\|(\partial_i\rho_n)A_j\|_{L^p(\mathbb{R}^3\setminus B_n(0))}\leq \|\partial_i\rho_n\|_{L^3(\mathbb{R}^3)}\|A_j\|_{L^q(\mathbb{R}^3\setminus B_n(0))}.
	\end{gather}
	We observe that $\|A_j\|_{L^q(\mathbb{R}^3\setminus B_n(0))}$ converges to zero by dominated convergence, since $A_j\in L^q(\mathbb{R}^3)$ and that $\|\partial_i\rho_n\|_{L^3(\mathbb{R}^3)}$ is uniformly bounded due to the fact that each $\rho_n$ is supported within $B_{3n}(0)$ and due to the decay of the derivatives of the $\rho_n$. This implies that the first term in (\ref{AE1}) converges to zero in $L^p(\mathbb{R}^3)$. As for the second term in (\ref{AE1}) we note that the existence of $\operatorname{curl}(A)$ implies that
	\begin{gather}
		\nonumber
		-\epsilon^{ijk}e_k\int_{\mathbb{R}^3}\partial_i(\phi_n(x-y)\rho_n(y))A_j(y)d^3y=\int_{\mathbb{R}^3}\phi_n(x-y)\rho(y)\operatorname{curl}(A)(y)d^3y=(\phi_n\star(\rho_n \operatorname{curl}(A)))(x).
	\end{gather}
	Hence, the second term in (\ref{AE1}) converges to $\operatorname{curl}(A)$ in $L^p(\mathbb{R}^3,\mathbb{R}^3)$ because $\operatorname{curl}(A)\in L^p(\mathbb{R}^3,\mathbb{R}^3)$ by assumption. This concludes the proof.
\end{proof}
We are now in the position to prove the gauge independence of helicity, c.f. \cite[Corollary 5.5.1]{G20Diss}
\begin{cor}[Gauge invariance of helicity]
	\label{AC2}
	Let $A_1,A_2\in L^3(\mathbb{R}^3,\mathbb{R}^3)$ with $\operatorname{curl}(A_1)=\operatorname{curl}(A_2)\in L^{\frac{3}{2}}(\mathbb{R}^3,\mathbb{R}^3)$. Then
	\begin{gather}
		\nonumber
		\int_{\mathbb{R}^3}A_1\cdot \operatorname{curl}(A_1)d^3x=\int_{\mathbb{R}^3}A_2\cdot \operatorname{curl}(A_2)d^3x.
	\end{gather}
\end{cor}
\begin{proof}[Proof of \Cref{AC2}]
	For notational simplicity we define $B:=\operatorname{curl}(A_1)$ and $A:=A_1-A_2$, then the claimed identity is equivalent to $\int_{\mathbb{R}^3}A\cdot Bd^3x=0$. To see this we can apply \Cref{AL1} with $q=3$ and $p=\frac{3}{2}$ and find a sequence $(A_n)_n$ of smooth and compactly supported vector fields, such that $A_n\rightarrow A$ in $L^3(\mathbb{R}^3,\mathbb{R}^3)$ and $\operatorname{curl}(A_n)\rightarrow \operatorname{curl}(A)=0$ in $L^{\frac{3}{2}}(\mathbb{R}^3,\mathbb{R}^3)$. Then, since $3$ is the H\"{o}lder conjugate of $\frac{3}{2}$ we obtain the following identities
	\begin{gather}
		\nonumber
		\int_{\mathbb{R}^3}B\cdot Ad^3x=\lim_{n\rightarrow\infty}\int_{\mathbb{R}^3}\operatorname{curl}(A_1)\cdot A_nd^3x=\lim_{n\rightarrow\infty}\int_{\mathbb{R}^3}\operatorname{curl}(A_n)\cdot A_1d^3x=0,
	\end{gather}
	where we used the defining properties of the curl operator.
\end{proof}
\bibliographystyle{plain}
\bibliography{mybibfileNOHYPERLINK}
\footnotesize
\end{document}